\newlist{cvdesc}{description}{1}
\setlist[cvdesc]{nosep,
labelindent=0pt,
labelwidth=2.8cm,
labelsep*=0.2cm,
leftmargin=cm,
font=\normalfont,
align=right}
\newlist{compactenum}{enumerate}{3}
\setlist[compactenum]{nosep,itemsep=2pt,topsep=2pt,labelindent=1em,leftmargin=1em}
\setlist[compactenum,1]{label=\textbullet}
\newtheorem{theorem}{Theorem}
\newtheorem{lemma}[theorem]{Lemma}
\newtheorem{proposition}[theorem]{Proposition}
\newtheorem{corollary}[theorem]{Corollary}
\newtheorem{remark}[theorem]{Remark}
\newtheorem{definition}[theorem]{Definition}
\newtheorem{assumption}[theorem]{Assumption}
\newtheorem{example}[theorem]{Example}
\newcommand{\fieldx}{\ensuremath{K}}
\newcommand{\N}{\mathbb{N}}
\newcommand{\Q}{\mathbb{Q}}
\newcommand{\Z}{\mathbb{Z}}
\DeclareMathOperator{\lm}{lm}
\DeclareMathOperator{\LM}{lm}
\DeclareMathOperator{\LD}{LD}
\def\<#1>{\langle#1\rangle}
\newcommand{\amba}{\mathfrak{a}}
\DeclareMathOperator{\supp}{supp}
\newcommand{\ideal}{\unlhd}
\def\Fr{\Sigma}
\def\MFr{M(\Fr)}
\DeclareMathOperator{\sig}{sig}
\DeclareMathOperator{\siga}{SIG}
\DeclareMathOperator{\lma}{LM}
\DeclareMathOperator{\Syz}{Syz}
\def\Sig{^{[\Sigma]}}
\begin{document}

\title{Modular Algorithms For Computing Gr\"obner Bases in Free Algebras}
\author[C.~Hofstadler]{Clemens Hofstadler\textsuperscript{1}}
\renewcommand\thefootnote{$\ast$}
\address{\textsuperscript{1}Institute for Symbolic Artificial Intelligence,
        Johannes Kepler University, Linz, Austria}
\email{clemens.hofstadler@jku.at}

\author[V. Levandovskyy]{Viktor Levandovskyy\textsuperscript{2}}
\address{\textsuperscript{2}American University Kyiv, Ukraine}
\email{viktor.levandovskyy@auk.edu.ua}

%
\begin{abstract}
In this work, we extend modular techniques for computing Gröbner bases involving rational coefficients to (two-sided) ideals in free algebras.
We show that the infinite nature of Gröbner bases in this setting renders the classical approach infeasible.
Therefore, we propose a new method that relies on signature-based algorithms.
Using the data of signatures, we can overcome the limitations of the classical approach and obtain a practical modular algorithm.
Moreover, the final verification test in this setting is both more general and more efficient than the classical one.
We provide a first implementation of our modular algorithm in \textsc{SageMath}.
Initial experiments show that the new algorithm can yield significant speedups over the non-modular approach.
\end{abstract}

\maketitle

%

\section{Introduction}

Gröbner bases are a fundamental tool in computational algebra and algebraic geometry, providing a
systematic approach to solving a wide range of problems related to  polynomial ideals and algebraic varieties.
Initially introduced for commutative polynomials~\cite{Buc65}, the theory of Gröbner bases has since been extended to various noncommutative settings, including $G$-algebras~\cite{KW} (see also~\cite{vikThesis} and references therein) and free (associative) algebras~\cite{Ber78,Mor85}. 
In the latter settings, Gröbner bases have found important applications in studying finitely presented algebras or in automated theorem proving~\cite{HW94,CHRR20,SL20,RRHP21,Hof23}.

When computing Gröbner bases over the rationals, it is a well-known problem that intermediate coefficient
swell can drastically hamper computations. 
To address this issue, modular methods have been developed, which first compute a Gröbner basis of a given
ideal modulo different primes, thereby bounding the size of the coefficients, and then reconstruct
a rational Gröbner basis from these modular images.
Originally proposed for commutative polynomials~\cite{elbert83, A, pfister2011}, these techniques have since been extended to $G$-algebras~\cite{DEVT20}.

In this work, we develop a modular Gröbner basis algorithm for (two-sided) ideals in free algebras.
In contrast to all previously studied settings, the free algebra is not Noetherian, which implies that Gröbner bases can be infinite.
As a result, algorithms for computing them may not terminate and serve only as enumeration procedures.
The traditional approach to modular Gröbner basis computations relies implicitly on the assumption that all Gröbner bases are finite.
Since this is no longer guaranteed in the free algebra, the classical approach becomes infeasible.

In particular, a key aspect of modular algorithms is their dependence on so-called \emph{lucky} primes. 
Only from such primes a correct result can be reconstructed. 
In the classical settings of modular Gröbner basis algorithms, it can be proven that for any given ideal, there are only finitely many unlucky primes~\cite{A,DEVT20}.
In contrast, we show in Example~\ref{ex:infinitely-many-unlucky-primes} that there exist ideals in the free algebra for which every prime is unlucky.
Moreover, in our setting, measuring progress in the computation of an infinite Gröbner basis poses a significant challenge.
However, addressing this issue is critical for the development of a modular algorithm, as it directly affects the reconstruction of the rational basis and its subsequent verification.
We discuss these problems in more detail in Section~\ref{sec:problems-with-classical-approach}.

To remedy the issues mentioned above, we propose a new approach to modular Gröbner basis computations in which we replace traditional Gröbner basis algorithms by signature-based methods.
Since their introduction with the goal of improving Gröbner basis computations for commutative polynomials~\cite{Fau02}, 
signature-based algorithms have been studied extensively, see~\cite{EF17,Lai24} and references therein.
Recently, one of the authors has generalized these algorithms to the free algebra in joint work~\cite{HV22,HV23}.

Signature-based algorithms compute, in addition to a Gröbner basis, some information on how the polynomials in that basis were derived from the input.
This additional information, called \emph{signatures}, not only allows the algorithms to predict and avoid redundant computations, 
but it can also be exploited in other applications, for instance, to speed up certain ideal\nobreakdash-theoretic operations~\cite{ELMS23},
to compute information related to minimal projective resolutions~\cite{King14},
or to compute short certificates for ideal membership~\cite{HV24}.
Moreover, signature-based algorithms can enumerate a (potentially infinite) Gröbner basis in an incremental fashion, 
analogous to a degree-by-degree computation of a homogeneous Gröbner basis. 
Precisely this structure is what we exploit in our modular algorithm.

Another advantage of our signature-based approach over the classical method is a more general and efficient verification criterion.
In a modular algorithm, it is crucial to verify whether the reconstructed result is indeed correct.
In the case of Gröbner basis computations, this involves two steps: 
\begin{enumerate*}
\item verifying that the reconstructed basis generates the same ideal as the original input, and\label{item:intro-1}
\item verifying that the reconstructed basis is a Gröbner basis.\label{item:intro-2}
\end{enumerate*}
Using the classical approach,~\eqref{item:intro-1} is in other noncommutative settings only possible for homogeneous ideals, cf.~\cite[Thm.~22]{DEVT20}, and~\eqref{item:intro-2} relies on Buchberger's S-polynomial criterion, which requires a substantial amount of polynomial arithmetic, in general, quadratic in the size of the candidate basis.
In contrast, our signature-based approach utilizes the \emph{cover criterion}~\cite{GVW16,HV23}, which yields a verification procedure that is applicable to all ideals, including inhomogeneous ones, and that reduces the cost of expensive polynomial arithmetic to a 
linear amount in the size of the candidate basis.


We have implemented our modular algorithm (Algorithm~\ref{modGalgo}) as part of the \textsc{SageMath} package \texttt{signature\_gb}
\footnote{available at \url{https://github.com/ClemensHofstadler/signature_gb}}~\cite{Hof23}, 
which offers functionality for signature-based Gröbner basis computations in free algebras.
In Section~\ref{sec:experiments}, we report on the performance of our new algorithm and show that it can drastically speed up the computation of noncommutative Gröbner bases.

\section{Preliminaries}\label{prelim}

In this section, we recall basic notions and constructions relevant for this work.
In particular, we recall the most important aspects of the Gröbner basis theory in free algebras. 
For proofs and further details, we refer readers to, e.g.,~\cite{bk07,Mor16,Hof23}.

\subsection{Free Algebra}

Given a finite set of indeterminates $X = \{x_1,\ldots,x_n \}$, 
we write ${\left\langle X\right\rangle} = \langle x_{1},\ldots, x_{n} \rangle$ for the \emph{free monoid} on
$X$, containing all finite words over the alphabet $X$ including the empty word, which we identify with $1$.
We refer to elements in $\langle X \rangle$ as \emph{monomials}.

For a field $K$, we consider the corresponding monoid algebra $K\langle X\rangle = \fieldx\langle
x_{1},\ldots, x_{n} \rangle$, that is, the \emph{free (associative) $K$-algebra} generated by $\langle X \rangle$.
Elements in $K\<X>$ are finite sums of the form $c_1 w_1 + \cdots + c_d w_d$ with nonzero $c_1,\dots,c_d \in K$ and pairwise distinct $w_1,\dots,w_d \in \<X>$.
We consider these elements as \emph{(noncommutative) polynomials} with coefficients in $K$ and monomials in $\<X>$.
The \emph{support} of $f = c_1 w_1 + \cdots + c_d w_d \in K\<X>$ is
$$
    \supp(f)=\{ w_1,\dots,w_d\}.
$$
By convention, for the zero polynomial, we have $\supp(0)=\emptyset$.

In this work, when speaking about an ideal $I \subseteq K\<X>$, we always mean a two-sided ideal, also denoted by $I \ideal K\<X>$.
Given a set of polynomials $F \subseteq K\<X>$, we let $\<F>$ be the ideal generated by $F$, that is,
$$
	\<F> = \left\{\sum_{i=1}^k p_i f_i q_i \mid f_i\in F,\; p_i, q_i\in K\<X>,\; k\in\N\right\}.
$$
The set $F$ is called a \emph{generating set} of $\<F>$.
An ideal $I \ideal K\<X>$ is \emph{finitely generated} if it has a finite generating set.

\begin{remark}
If $|X| > 1$, then there exist ideals in $K\<X>$ which are not finitely generated.
A classical example is given by $\< \{ xy^n x \mid n \in \N \}> \ideal K\<X>$.
\end{remark}

A total ordering $\prec$ on $\left\langle X\right\rangle$ is called a 
\emph{monomial (well-)ordering} if the following conditions hold:
\begin{enumerate}
        \item $\forall a,b,v,w \in \left\langle X\right\rangle$: if $v\prec w$, then $a \cdot v \cdot b \prec a\cdot w \cdot b$;
        \item every nonempty subset of $\<X>$ has a least element.\label{cond:monomial-order-2}
\end{enumerate}

\begin{remark}
By Higman's lemma~\cite{Hig52}, condition~\eqref{cond:monomial-order-2} is equivalent to
\begin{enumerate}
    \item[\textnormal{(2')}] $\forall w \in \left\langle X\right\rangle$: if $w \neq 1$, then $1 \prec w$.
\end{enumerate}
However, note that this equivalence only holds for finite $X$.
\end{remark}

\begin{remark}
Recall, that the left (and, analogously, right) lexicographical orderings with, say $x \prec y$, are not well-orderings: $1 \prec x$, multiplied by $y$ from the right implies $y \prec xy$. 
However, since $y \succ x$ holds, comparison from the left delivers $y \succ xy$, 
which is a contradiction.
Nonetheless, using a lexicographic ordering as a tie-breaker after first comparing the (weighted) degrees of different words yields classical examples of monomial orderings.
\end{remark}

For a fixed monomial ordering $\prec$, we define the \emph{leading monomial} of a nonzero polynomial $f\in K\<X>$ as
$$
    \LM(f) = \max_\prec \supp(f) \in \<X>.
$$
If the coefficient of $\lm(f)$ in $f$ is $1$, then $f$ is called \emph{monic}.

So far, the leading monomial of the zero polynomial is undefined.
To avoid exceptions for the zero polynomial,
it is convenient to extend a monomial ordering $\prec$ to $\<X> \cup \{0\}$ and set $0 \prec w$ for all $w \in \<X>$.
In the following, we consider such an extension, and with this in mind, we define $\LM(0) = 0$.

\subsection{Gr\"obner Bases in Free Algebras}  
\label{sec:GB}

Given a monomial ordering $\prec$ on~$\<X>$ and a subset $I\subseteq K\<X>$, we set
$$
    \LM(I)= \left\{\LM(f)\mid f\in I \setminus \{0\}\right\} \subseteq \langle X \rangle.
$$
Note that if $I$ is a nonzero ideal of $K\<X>$, then $\LM(I)$ is a (two-sided) monoid ideal of the free monoid $\langle X \rangle$.  


\begin{definition} Let $I \ideal K\<X>$ be a nonzero ideal. 
A \emph{(two-sided) Gr\"obner basis} of $I$ (with respect to $\prec$) is a subset  
$\mathcal G\subseteq I$ such that 
$$
    \<\LM(I)> = \<\LM(\mathcal G)>.
$$
A subset  $\mathcal G\subseteq K\<X>$ is a \emph{Gr\"obner basis} 
if it is a Gr\"obner basis of the ideal $\langle \mathcal G\rangle\ideal K\<X>$ it generates.
\end{definition} 





Contrary to the case of Noetherian algebras, not all ideals in $K\<X>$ possess a finite Gr\"obner basis. 
Nevertheless, there exists an adaptation of Buchberger's S-polynomial criterion to free algebras, known as 
\emph{Bergman's diamond lemma}~\cite[Thm.~1.2]{Ber78}, and based on that, a noncommutative analogue of Buchberger's algorithm to enumerate
a (possibly infinite) Gröbner basis~\cite{Mor85}.
This procedure terminates if and only if the ideal admits a finite Gröbner basis (w.r.t.~the chosen monomial ordering).

The notion of a reduced Gr\"obner basis is analogous to that in the commutative case and, 
for a fixed monomial ordering, every ideal of $K\<X>$ has a uniquely determined such reduced Gr\"obner basis~\cite[Prop.~2.4.41]{Hof23}.

\section{Signature-based Gröbner Bases in Free Algebras}
\label{sec:signature-gb}

Our modular algorithm will rely crucially on properties of noncommutative signature-based Gröbner bases and the corresponding algorithms to compute them.
We therefore recall their theory and the relevant results in the following.
For a more in-depth discussion, see~\cite[Ch.~3]{Hof23}.

\subsection{Free Bimodule}

Signature-based algorithms rely on adding a module perspective to the polynomial computations.
Therefore, for fixed $r \in \N$, we let $\Sigma = (K\<X> \otimes K\<X>)^r$ be the \emph{free $K\<X>$-bimodule} of rank $r$.
If we denote $\varepsilon_i = (0,\dots,0, 1\otimes 1, 0, \dots,0)$ with $1 \otimes 1$ appearing in the $i$th position ($i = 1,\dots,r$), then
the canonical basis of $\Sigma$ is given by $\varepsilon_1,\dots,\varepsilon_r$.

A \emph{bimodule monomial} is an expression of the form $a \varepsilon_i b$ with $a,b \in \<X>$ and $1 \leq i\leq r$.
The set of all bimodule monomials is denoted by $\MFr$.
Note that $\Sigma$ forms a $K$-vector space with basis $\MFr$,
that is, every $\alpha \in \Sigma$ can be written uniquely as $\alpha = c_1 \mu_1 + \dots + c_d \mu_d$
with nonzero $c_1,\dots,c_d \in K$ and pairwise distinct $\mu_1,\dots,\mu_d \in \MFr$.
The support of such an element is 
$$
    \supp(\alpha) = \{\mu_1,\dots,\mu_d\}.
$$

A total ordering $\prec_\Sigma$ on $\MFr$ is called a 
\emph{bimodule (well-)ordering} if the following conditions hold:
\begin{enumerate}        
\item $\forall \mu,\sigma \in\MFr, 
\forall v,w \in \left\langle X\right\rangle$: 
if $\mu\prec_\Sigma \sigma$, then $v\cdot \mu \cdot w \prec_\Sigma v\cdot \sigma \cdot w$;
\item every nonempty subset of $\MFr$ has a least element.\label{item:bimodule-ordering}
\end{enumerate}

Many classical bimodule orderings arise as extensions of monomial orderings, see \cite{bk07, Mor16}. 
A property of bimodule orderings that will be crucial for our modular Gr\"obner basis algorithm is that of fairness.

\begin{definition}
A bimodule ordering $\prec_\Sigma$ is \emph{fair} if the set 
$\{ \mu \in \MFr \mid \mu \prec_\Sigma \sigma \}$
is finite for all $\sigma \in \MFr$.
\end{definition}

For $K\<X>$ itself (i.e., when $r=1$) the name \emph{well-founded} has also been used in the literature.

\begin{example}
Let us quickly review a recipe for establishing a fair monomial ordering on $K\<X>$. 
Suppose that every variable $x_i \in \<X>$ is assigned a positive \emph{weight} $\omega_i\in\mathbb{R}_{> 0}$. 
We define the \emph{weighted degree} function $\deg_\omega \colon \<X> \to \mathbb{R}$ by setting $\deg_\omega(x_{i_1}\dots x_{i_d}) = \sum_{j=1}^d \omega_{i_j}$, with the special case $\deg_\omega(1)=0$.
Classical \emph{degree} orderings, such as degree lexicographic, assign all weights to~$1$.

Such a weighted degree function $\deg_\omega$ can extend any tie-breaking relation $\prec$ (like left or right lexicographic comparisons, which are not monomial orderings themselves) on monomials of the same weighted degree to a monomial ordering $\prec_\omega$. 
Such an ordering is always fair 
and can be used to construct a family of fair bimodule orderings as explained below. 
\end{example}

\begin{proposition}
\label{prop:fair-ordering}
Let $\prec$ be a fair monomial ordering on $K\<X>$ and $<$ be a strict total ordering on the generators $\varepsilon_1,\ldots,\varepsilon_r$ of the free bimodule $\Fr$. 
Define a new relation $\prec_\Fr$ on $\MFr$ as follows: 
\renewcommand{\thefootnote}{\fnsymbol{footnote}}
\begin{align*}
a\varepsilon_i b \prec_\Fr
c \varepsilon_j d \iff &\left( ab \prec cd \right) \; \vee \\
 &\left(ab = cd \wedge \varepsilon_i < \varepsilon_j \right)\; \vee \\
 &\left(ab = cd \wedge \varepsilon_i=\varepsilon_j \wedge a \prec c\right)\footnotemark.
\end{align*}
\footnotetext{replacing $a \prec c$ by $b \prec d$ yields a variant where we compare from the right}
Then $\prec_\Fr$ is a fair bimodule ordering.
\end{proposition}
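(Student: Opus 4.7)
The plan is to verify the four required properties of $\prec_\Sigma$ in turn: totality, the multiplication-compatibility axiom (1), fairness, and then the well-ordering axiom (2), deriving (2) as a consequence of fairness.

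First, I would check totality. Given two bimodule monomials $a\varepsilon_ib$ and $c\varepsilon_jd$ with $a\varepsilon_ib \neq c\varepsilon_jd$, the definition splits along the lexicographic cascade $(ab,\varepsilon_i,a)$. If $ab \neq cd$, totality of $\prec$ on $\langle X\rangle$ decides the comparison. If $ab = cd$ and $\varepsilon_i \neq \varepsilon_j$, totality of $<$ on $\{\varepsilon_1,\dots,\varepsilon_r\}$ decides it. If $ab=cd$ and $\varepsilon_i=\varepsilon_j$ but the monomials differ, then $a \neq c$ (otherwise $a=c$ forces $b=d$), and totality of $\prec$ again decides.

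Next I would verify axiom (1). Fix $v,w \in \langle X \rangle$ and suppose $a\varepsilon_ib \prec_\Sigma c\varepsilon_jd$. Observe $v\cdot(a\varepsilon_ib)\cdot w = (va)\varepsilon_i(bw)$ and that $(va)(bw) = v(ab)w$ and $(vc)(dw) = v(cd)w$. In the first clause of the definition, $ab\prec cd$ implies $v(ab)w \prec v(cd)w$ by compatibility of $\prec$, so the first clause holds for the shifted monomials. In the second and third clauses, $ab=cd$ yields $(va)(bw)=(vc)(dw)$, and the comparison of the $\varepsilon_i$ (resp.\ of $a\prec c$, which upgrades to $va\prec vc$ by compatibility of $\prec$) is unchanged.

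The main obstacle is fairness, so I would treat it carefully. Fix $\sigma=c\varepsilon_jd\in M(\Sigma)$ and consider $S=\{a\varepsilon_ib \prec_\Sigma \sigma\}$. Any element of $S$ falls into one of two disjoint classes: either $ab\prec cd$, or $ab=cd$ (with the subsidiary conditions on $\varepsilon_i$ and $a$). For the second class, $ab=cd$ has only $|cd|+1$ factorizations in $\langle X\rangle$, $i$ ranges over $\{1,\dots,r\}$, and $a$ over the at most $|cd|+1$ prefixes of $cd$, giving finitely many possibilities. For the first class, the set $W=\{w\in\langle X\rangle : w\prec cd\}$ is finite by fairness of $\prec$, each $w\in W$ admits at most $|w|+1$ factorizations $w=ab$, and $i$ ranges over $\{1,\dots,r\}$, so this class is also finite. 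Hence $S$ is finite.

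Finally, I would derive the well-ordering axiom (2) from fairness. Given a nonempty subset $T\subseteq M(\Sigma)$, pick any $\tau\in T$. By fairness, $T\cap\{\mu : \mu \prec_\Sigma \tau\}$ is finite; if empty, $\tau$ is the minimum of $T$, otherwise the minimum of this finite set (which exists since $\prec_\Sigma$ is total) is the minimum of $T$. Thus $\prec_\Sigma$ is a fair bimodule ordering, completing the proof.
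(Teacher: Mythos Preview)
Your proof is correct and follows essentially the same approach as the paper's: both verify compatibility by running through the three clauses of the definition, both establish fairness by bounding the number of $a\varepsilon_i b$ satisfying each clause (finitely many $ab\prec cd$ by fairness of $\prec$, finitely many factorizations of $cd$, finitely many~$\varepsilon_i$), and both obtain the well-ordering condition as a consequence of fairness. You are slightly more thorough in that you explicitly verify totality and spell out the argument that fairness implies well-foundedness, whereas the paper leaves these as evident.
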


\begin{proof}
Let $v,w \in \<X>$ and suppose that $a\varepsilon_i b \prec_\Fr c \varepsilon_j d$ holds. We first have to show that $v \cdot a\varepsilon_i b \cdot w \prec_\Fr
v\cdot c \varepsilon_j d \cdot w$ holds. 
Indeed, by the definition, we have to prove that the following is true:
\begin{gather*}
\left( vabw \prec vcdw \right) \; \vee \;
 \left(vabw = vcdw \wedge \varepsilon_i < \varepsilon_j \right) \; \vee \\
 \left(vabw = vcdw \wedge  \varepsilon_i = \varepsilon_j \wedge va \prec vc \right),
\end{gather*}
but this follows from the fact that $\prec$ is a monomial ordering.
Clearly, it also holds for the variant where we replace $a \prec c$ by $ b \prec d$, 
since $b \prec d$ implies $bw \prec dw$.

Note that fairness of a total ordering implies Condition~\eqref{item:bimodule-ordering} of a bimodule ordering.
Thus, it remains to prove that $\prec_\Fr$ is fair.
To this end, let $c \varepsilon_j d \in \MFr$.
We analyze whether the set 
\begin{align*}
S = \left\{a\varepsilon_i b \in \MFr \mid a\varepsilon_i b \prec_\Fr c \varepsilon_j d \right\}
\end{align*}
is finite. 
First, note that there are only finitely many bimodule monomials $a \varepsilon_i b$
that satisfy the condition $ab \prec cd$, because $\prec$ is assumed to be fair.
Further, there are also only finitely many $a \varepsilon_i b$ that satisfy the condition $ab = cd \wedge \varepsilon_i < \varepsilon_j$, since the monomial $cd$ only has finitely many factorizations and there are only finitely many generators $\varepsilon_i$, which are smaller than $\varepsilon_j$.
Lastly, there are also only finitely many $a \varepsilon_i b$ that satisfy the last condition $ab = cd \wedge \varepsilon_i = \varepsilon_j \wedge a \prec c$.
All in all, there are only finitely many bimodule monomials that satisfy any of the conditions in the proposition, and therefore, the set $S$ is finite.
\end{proof}

The bimodule orderings constructed in Proposition~\ref{prop:fair-ordering} can be seen as generalizations of the concept of \emph{term-over-position ordering} $\prec_\textup{ToP}$, since the comparison of bimodule components occurs after the comparison of monomials. 

\begin{example}
The classical counterpart of a $\prec_\textup{ToP}$ ordering is the \emph{position-over-term ordering} $\prec_\textup{PoT}$, which extends a monomial ordering $\prec$ on $K\<X>$ as
\[
a\varepsilon_i b \prec_\textup{PoT} 
c \varepsilon_j d \iff
\left(\varepsilon_i < \varepsilon_j \right) \;\vee\; \left( \varepsilon_i=\varepsilon_j \,\wedge\, a\varepsilon_i b \prec 
c \varepsilon_i d\right),
\]
where the latter comparison  happens in the same component $\varepsilon_i$, and can be defined variously, for instance analogously to the previous Proposition~\ref{prop:fair-ordering}. 

An important observation is immediate: $\prec_\textup{PoT}$ is not fair if the rank of the bimodule $\Sigma$ is at least 2.
Indeed, the set $\{ \mu \in \MFr \mid \mu \prec_\textup{PoT} \varepsilon_2\}$ is infinite, 
containing all elements of the form $a \varepsilon_1 b$ with $a,b \in \<X>$.
\end{example}

\begin{example}



Also the bimodule components can be graded. 
Namely, with fixed nonzero $f_1,\dots,f_r \in K\<X>$, both bimodule orderings mentioned above can be extended to degree-compatible orderings by first comparing the weighted degrees of bimodule monomials, defined as $\deg_\omega(a \varepsilon_i b) \coloneqq \deg_\omega (a f_i b)$ for all $a \varepsilon_i b \in \MFr$, before using $\prec_\textup{PoT}$ or $\prec_\textup{ToP}$ as a tie-breaker.
This yields the degree-over-position-over-term ordering $\prec_\textup{DoPoT}$ and the 
degree-over-term-over-position ordering $\prec_\textup{DoToP}$, respectively. 
Note that, with this construction, the bimodule orderings $\prec_\textup{DoPoT}$ and $\prec_\textup{DoToP}$ are fair.
\end{example}

\emph{Signatures} are the module-equivalent to leading monomials of polynomials.
For the following, we fix a bimodule ordering $\prec_\Sigma$.

\begin{definition}
The \emph{signature} of a nonzero element $\alpha \in \Sigma$ is 
$$
    \sig(\alpha) = \max_{\prec_{\Sigma}} \supp(\alpha) \in \MFr.
$$
\end{definition}

Analogous to the case of polynomials, a nonzero bimodule element $\alpha$ is \emph{monic}
if the coefficient of $\sig(\alpha)$ in $\alpha$ is $1$.

\subsection{Signature-based Gröbner Bases in Free Algebras}

We fix a family of polynomials $(f_1,\dots,f_r) \in K\<X>^r$ generating an ideal $I = \<f_1,\dots,f_r>$.
To encode relations between the $f_i$, we consider the surjective $K\<X>$-bimodule homomorphism
$$
    	\overline{\cdot} \colon \Fr \to I, \quad \alpha = \sum_{i = 1}^d c_i a_i \varepsilon_{j_i} b_i \quad\mapsto\quad \overline  \alpha \coloneqq \sum_{i = 1}^d c_i a_i f_{j_i} b_i.
$$

A \emph{labeled polynomial} $f^{[\alpha]}$ is a pair $(f,\alpha) \in I \times \Sigma$ with $\overline{\alpha} = f$.
The set of all labeled polynomials is denoted by $I\Sig$, that is,
$$
    I\Sig = \{ f^{[\alpha]} \mid f \in I, \overline{\alpha} = f\}.
$$
Note that $I\Sig$ forms a $K\<X>$-subbimodule of $I \times \Sigma$ with component-wise addition and scalar multiplication.
We call it the \emph{labeled module} generated by $f_1,\dots,f_r$.
By a slight abuse of notation, we also write $I\Sig = \< f_1^{[\varepsilon_1]},\dots,  f_r^{[\varepsilon_r]}>$.

A \emph{syzygy} of $I\Sig$ is an element $\gamma \in \Sigma$ such that $\overline{\gamma} = 0$.
The set of all syzygies of $I\Sig$ is denoted by $\Syz(I\Sig)$. 
It forms a $K\<X>$-subbimodule of $\Sigma$.

A central concept for signature-based algorithms is that of \emph{sig-reducibility}.
We recall the definition in the following.

\begin{definition}\label{def:sig-reduction}
Let $f^{[\alpha]}, g^{[\beta]} \in I\Sig$ with $g \neq 0$.
We say that $f^{[\alpha]}$ is \emph{$\sig$\nobreakdash-reducible} by $g^{[\beta]}$ if there exist $a,b\in\<X>$ such that the following conditions hold:
\begin{enumerate}
	\item $\LM(agb) \in \supp(f)$;
	\item $\sig(a \beta b) \preceq \sig(\alpha)$;
\end{enumerate}

If $\lm(agb) = \lm(f)$, we say that $f^{[\alpha]}$ is \emph{top} sig-reducible.
Furthermore, we say that $f^{[\alpha]}$ is \emph{regular} sig-reducible if $\sig(a \beta b) \prec \sig(\alpha)$.

If $f^{[\alpha]}$ is (regular/top) sig-reducible by $g^{[\beta]}$, then the corresponding (regular/top) sig-reduction is
$$
    f^{[\alpha]} - (agb)^{[a\beta b]}.
$$
\end{definition}

The notions from Definition~\ref{def:sig-reduction} generalize to sets of labeled polynomials \mbox{$\mathcal G \subseteq I\Sig$} in a straightforward way.
In particular, $f^{[\alpha]}$ is (regular/top) $\sig$\nobreakdash-reducible by $\mathcal G$ if it is (regular/top) sig-reducible by an element in $\mathcal G$.
Otherwise, $f^{[\alpha]}$ is (regular/top) sig-reduced by $\mathcal G$.

Definition~\ref{def:sig-reduction} requires the reducer $g^{[\beta]}$ to have nonzero polynomial part.
In the following, however, we also need reductions by syzygies.
This leads to the following definition.

\begin{definition}
Let $f^{[\alpha]}, 0^{[\gamma]} \in I\Sig$ with $\gamma \neq 0$.
We say that $f^{[\alpha]}$ is \emph{syz\nobreakdash-reducible} by $0^{[\gamma]}$ if there exist $a,b\in\<X>$ such that $\sig(a \gamma b) \in \supp(\alpha)$.
If actually $\sig(a \gamma b) = \sig(\alpha)$, then $f^{[\alpha]}$ is \emph{top} syz-reducible by $0^{[\gamma]}$.
\end{definition}

The definition above extends to sets $\mathcal G \subseteq I\Sig$ as before:
$f^{[\alpha]}$ is (top) syz\nobreakdash-reducible by $\mathcal G$ if it is (top) syz-reducible by an element in $\mathcal G$.
Otherwise, it is (top) syz-reduced by $\mathcal G$.

Finally, we can recall the notion of \emph{strong Gröbner bases}\footnote{Note that the term \emph{strong Gröbner basis} is overloaded and also has a second meaning.
In other contexts, this term is used to describe special Gröbner bases in polynomial rings over Euclidean coefficient rings like $\Z$, cf.~\cite[Ch.~4]{AL94}.
}, which was first introduced in~\cite{GVW16} for commutative polynomials
and combines a Gröbner basis of $I$ and a Gröbner basis of $\Syz(I\Sig)$ in one set.

A key property of strong Gröbner bases, which we exploit in our modular algorithm is as follows. 
Strong Gröbner bases
can be characterized \emph{up to a signature} $\sigma \in \MFr$, 
meaning that the defining properties are satisfied  only by the elements with signature less than $\sigma$.
This is analogous to how a homogeneous Gröbner basis can be characterized \emph{up to a certain degree}.

\begin{definition}\label{def:strong-gb}
A subset $\mathcal G \subseteq I\Sig$ is a \emph{strong Gröbner basis} of $I\Sig$ \emph{(up to signature $\sigma \in \MFr$)} 
if the following conditions hold for all $f^{[\alpha]} \in I\Sig$ (with $\sig(\alpha) \prec \sigma$):
\begin{enumerate}
    \item if $f \neq 0$, then $f^{[\alpha]}$ is sig-reducible by $\mathcal G$;\label{cond:strong-gb-1}
    \item if $f = 0$, then $f^{[\alpha]}$ is syz-reducible by $\mathcal G$.\label{cond:strong-gb-2}
\end{enumerate}
\end{definition}

The first condition in Definition~\ref{def:strong-gb} implies that $\mathcal G$ contains a labeled Gröbner basis as introduced in~\cite[Def.~20]{HV22}
and, in particular, that the nonzero polynomial parts of $\mathcal G$ form a Gröbner basis of $I$.
The second condition implies that $\mathcal G$ contains also a Gröbner basis of the syzygy module $\Syz(I\Sig)$.
The following result, which is classical in the commutative setting~\cite[Prop.~2.2]{GVW16}, formalizes this observation.
We omit its proof here because the commutative proof carries over in a straightforward manner.

\begin{proposition}
Let $\mathcal G \subseteq I\Sig$ be a strong Gröbner basis of $I\Sig$.
Then the following hold:
\begin{enumerate}
    \item $\{ g \mid g^{[\beta]} \in \mathcal G, g \neq 0\}$ is a Gröbner basis of $I$;
    \item $\{ \beta \mid g^{[\beta]} \in \mathcal G, g = 0\}$ is a Gröbner basis of $\Syz(I\Sig)$.
\end{enumerate}
\end{proposition}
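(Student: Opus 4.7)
The plan is to derive both claims from a single technical upgrade of the strong Gr\"obner basis property: even though Definition~\ref{def:strong-gb} only guarantees sig- or syz-reducibility at some monomial in the support, one can always choose a representative for which reducibility occurs at the top, and it is top reducibility that translates directly into the Gr\"obner basis condition. Concretely, writing $G_1 = \{g \mid g^{[\beta]} \in \mathcal G,\, g \neq 0\}$ and $G_2 = \{\beta \mid g^{[\beta]} \in \mathcal G,\, g = 0\}$, top sig-reducibility of $f^{[\alpha]}$ by some $g^{[\beta]} \in \mathcal G$ yields $a, b \in \<X>$ with $a \LM(g) b = \LM(f)$, so $\LM(f) \in \<\LM(G_1)>$; top syz-reducibility of $0^{[\gamma]}$ gives $\sig(\gamma) = a \sig(\delta) b$ for some $\delta \in G_2$, which is exactly the Gr\"obner basis condition for the syzygy module.

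For part~(1), given nonzero $f \in I$, I would consider the set of all $h^{[\eta]} \in I\Sig$ with $\LM(h) = \LM(f)$ and pick one that is minimal with respect to the lexicographic pairing of $\sig(\eta)$ (under $\prec_\Sigma$) and the Dershowitz--Manna multiset extension of $\prec$ applied to $\supp(h) \setminus \{\LM(h)\}$. Both components are well-founded, so a minimum exists. Condition~\eqref{cond:strong-gb-1} of Definition~\ref{def:strong-gb} then supplies $g^{[\beta]} \in \mathcal G$ with $g \neq 0$ and $a, b$ realizing the sig-reduction. If $\LM(agb) \prec \LM(h)$, performing this reduction yields $h'^{[\eta']}$ with $\LM(h') = \LM(f)$; the new signature $\sig(\eta')$ is $\preceq \sig(\eta)$, and whenever it stays equal, the polynomial support $\supp(h') \setminus \{\LM(h)\}$ strictly decreases in the multiset ordering because $\LM(agb)$ is replaced by monomials strictly below it. Either way this contradicts minimality of $h^{[\eta]}$, forcing $\LM(agb) = \LM(h) = \LM(f)$.

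For part~(2), the argument is dual. Given nonzero $\gamma \in \Syz(I\Sig)$, I would minimize over $\gamma' \in \Syz(I\Sig)$ with $\sig(\gamma') = \sig(\gamma)$ using the multiset extension of $\prec_\Sigma$ on $\supp(\gamma') \setminus \{\sig(\gamma')\}$. Condition~\eqref{cond:strong-gb-2} supplies a syz-reducer $0^{[\delta]} \in \mathcal G$ with $\delta \neq 0$ and $a, b$ with $\sig(a\delta b) \in \supp(\gamma')$. If this monomial is strictly below $\sig(\gamma')$, the reduction $\gamma' - c \cdot a \delta b$ stays in $\Syz(I\Sig)$ (since $\overline{a \delta b} = 0$), keeps the signature, and strictly decreases the multiset, contradicting minimality. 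Hence $\sig(a\delta b) = \sig(\gamma)$, as desired.

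The main obstacle is precisely the upgrade from plain to top reducibility, since Definition~\ref{def:strong-gb} offers no a priori control over which monomial in the support is cancelled, and a direct appeal to the definition only cancels some monomial somewhere. The Dershowitz--Manna multiset extension of a well-founded order is tailor-made for this situation: a non-top reduction step removes one monomial and introduces finitely many strictly smaller ones, which is the defining move of that ordering. The remaining ingredients --- checking that sig-reductions preserve the ideal membership $\overline{\eta'} = h'$, that syz-reductions stay in $\Syz(I\Sig)$, and that top reducibility yields the leading-monomial divisibility characterising Gr\"obner bases --- are routine.
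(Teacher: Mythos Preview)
Your argument is correct. The paper itself omits the proof, remarking only that the commutative proof from~\cite{GVW16} carries over, so your proposal in fact supplies more than the paper does.

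Two minor comments. First, in part~(1) the signature component of your lexicographic measure is not actually needed: a non-top sig-reduction (with the coefficient chosen so as to cancel $\LM(agb)$) always strictly shrinks the multiset $\supp(h)\setminus\{\LM(h)\}$ in the Dershowitz--Manna order, regardless of what happens to the signature. So the multiset alone already does the job. Second, the paper's Definition~\ref{def:sig-reduction} writes the sig-reduction as $f^{[\alpha]} - (agb)^{[a\beta b]}$ without an explicit scalar; your argument silently (and correctly) inserts the scalar needed to cancel the targeted monomial, which is the intended reading and is harmless to note explicitly.
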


To end this section, we recall an important property of strong Gröbner bases (up to signature $\sigma$), which will be useful later.

\begin{lemma}\label{lemma:s-reductions}
Let $g^{[\beta]}, h^{[\gamma]}\in I\Sig$ be such that $\sig(\beta) = \sig(\gamma)$.
Furthermore, let $\mathcal G \subseteq I\Sig$ be a strong Gröbner basis up to signature $\sig(\beta)$.
If $g^{[\beta]}, h^{[\gamma]}$ are regular sig-reduced by $\mathcal G$, then $g = c \cdot h$ for some nonzero $c \in K$.
\end{lemma}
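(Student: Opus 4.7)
The plan is to form a $K$-linear combination of $\beta$ and $\gamma$ that cancels the common top signature $\sigma \coloneqq \sig(\beta) = \sig(\gamma)$ and then invoke the strong Gröbner basis property at signatures strictly below $\sigma$. Concretely, let $c_\beta, c_\gamma \in K \setminus \{0\}$ denote the coefficients of $\sigma$ in $\beta$ and $\gamma$, respectively, and set
\[
\delta = c_\gamma \beta - c_\beta \gamma, \qquad f = \overline{\delta} = c_\gamma g - c_\beta h.
\]
The coefficient of $\sigma$ in $\delta$ is $c_\gamma c_\beta - c_\beta c_\gamma = 0$, so either $\delta = 0$ or $\sig(\delta) \prec \sigma$. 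In either case, $f^{[\delta]}$ is a labeled element of $I\Sig$ whose signature lies strictly below $\sigma$.

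I would then argue by contradiction that $f = 0$. Assume $f \neq 0$. Since $\mathcal G$ is a strong Gröbner basis of $I\Sig$ up to signature $\sigma$, condition~\eqref{cond:strong-gb-1} of Definition~\ref{def:strong-gb} forces $f^{[\delta]}$ to be sig-reducible by some $p^{[\alpha]} \in \mathcal G$ with $p \neq 0$: there exist $a,b \in \<X>$ with
\[
\lm(apb) \in \supp(f) \quad\text{and}\quad \sig(a\alpha b) \preceq \sig(\delta) \prec \sigma.
\]
Because $\supp(f) \subseteq \supp(g) \cup \supp(h)$, the monomial $\lm(apb)$ lies in $\supp(g)$ or in $\supp(h)$. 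Either possibility exhibits a regular sig-reduction of $g^{[\beta]}$ or of $h^{[\gamma]}$ by $p^{[\alpha]}$ (regularity being granted by $\sig(a\alpha b) \prec \sigma$), which contradicts the hypothesis that both labeled polynomials are regular sig-reduced by $\mathcal G$.

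Therefore $f = 0$, which yields $g = (c_\beta/c_\gamma)\, h$ with $c_\beta/c_\gamma \in K \setminus \{0\}$, as desired. The only subtle point to check is that the argument still applies if $g$ or $h$ vanishes: if, say, $g = 0$ and $h \neq 0$, then $f = -c_\beta h \neq 0$, and the contradiction derived above rules this case out automatically; if both vanish, the conclusion $g = 1 \cdot h$ is trivial. I do not anticipate a substantial obstacle, since the whole argument is the classical cancellation trick—the one place that requires attention is verifying that the reducer discovered via $\mathcal G$ gives a \emph{regular} sig-reduction, which hinges on the strict inequality $\sig(\delta) \prec \sigma$ obtained from the cancellation step.
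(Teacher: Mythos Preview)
Your proof is correct and follows essentially the same approach as the paper: form the signature-cancelling combination, assume its polynomial part is nonzero, and use the strong Gröbner basis property below $\sigma$ together with $\supp(f)\subseteq\supp(g)\cup\supp(h)$ to produce a forbidden regular sig-reduction of $g^{[\beta]}$ or $h^{[\gamma]}$. You are more explicit than the paper about the coefficients $c_\beta,c_\gamma$, about why the reduction found is \emph{regular}, and about the degenerate cases $g=0$ or $h=0$, but the underlying argument is identical.
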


\begin{proof}
Consider $f^{[\alpha]} = g^{[\beta]} - c\cdot h^{[\gamma]}$,
with nonzero $c \in K$ such that the signatures cancel and we have $\sig(\alpha) \prec \sig(\beta)$.
Now assume, for contradiction, that $f \neq 0$.
By assumption $\sig(\alpha) \prec \sig(\beta)$.
Therefore, $f^{[\alpha]}$ is sig\nobreakdash-reducible by $\mathcal G$.
But this implies that $g^{[\beta]}$ or $h^{[\gamma]}$ is regular sig-reducible 
by $\mathcal G$ -- a contradiction to the assumption that both $g^{[\beta]}$ and $h^{[\gamma]}$ are regular sig\nobreakdash-reduced.
\end{proof}

\subsection{Reduced Strong Gröbner Bases}

In some applications it is beneficial to consider the reduced Gröbner basis of an ideal as it is unique.
Also our modular algorithm will rely crucially on the fact that we can characterize and compute a unique strong Gröbner basis.
Thus, in the following, we adapt the concept of reduced bases to our signature-based setting.

\begin{definition}\label{def:reduced}
A subset $\mathcal G \subseteq I\Sig$ is \emph{reduced}
if the following conditions hold for all $g^{[\beta]} \in \mathcal G$:
\begin{enumerate}
    \item if $g \neq 0$, then $g$ is monic; otherwise $\beta$ is monic;
    \item $g^{[\beta]}$ is regular sig-reduced by $\mathcal G$; 
    \item $g^{[\beta]}$ is top sig-reduced and syz-reduced by $\mathcal G \setminus \{g^{[\beta]}\}$.
\end{enumerate}
\end{definition}

Just as in the classical case, a reduced strong Gröbner basis is unique.
To show this, we introduce the following concept of \emph{leading data of a labeled polynomial}.
For $f^{[\alpha]} \in I\Sig$ with $\alpha \neq 0$, define its \emph{leading data} to be
$$
    \LD(f^{[\alpha]}) = \left(\LM(f), \sig(\alpha)\right) \in \left(\<X> \cup \{0\}\right) \times \MFr.
$$
Extend this to sets $\mathcal G \subseteq I\Sig$ by
$$
    \LD(\mathcal G) = \left\{ \LD(f^{[\alpha]}) \mid f^{[\alpha]} \in \mathcal G, \alpha \neq 0 \right\}.
$$
First, we show that the leading data of a reduced strong Gröbner basis is unique.

\begin{lemma}\label{lem:unique-leading-data}
If $\mathcal G$, $\mathcal H$ are reduced strong Gröbner bases of $I\Sig$, then $\LD(\mathcal G)= \LD(\mathcal H)$.
\end{lemma}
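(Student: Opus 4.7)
The plan is to proceed by contradiction, exploiting the well-ordering property of the bimodule ordering to reduce to a minimal disagreement. Suppose $\LD(\mathcal G) \neq \LD(\mathcal H)$. Since $\prec_\Sigma$ is a well-ordering on $\MFr$, the set of signatures appearing in $\LD(\mathcal G) \triangle \LD(\mathcal H)$ has a minimum element $\sigma_0$; by this minimality, the leading data of the two bases agree at every signature strictly below $\sigma_0$. Without loss of generality, I would pick $g^{[\beta]} \in \mathcal G$ with $\sig(\beta) = \sigma_0$ such that $(\LM(g), \sigma_0) \notin \LD(\mathcal H)$, and work toward a contradiction.

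The first step is to show that $g^{[\beta]}$ is also regular sig-reduced (and, if $g = 0$, syz-reduced) by $\mathcal H$. Any hypothetical regular sig-reducer $h^{[\gamma]} \in \mathcal H$ via multipliers $a, b \in \<X>$ would satisfy $\sig(\gamma) \preceq \sig(a\gamma b) \prec \sigma_0$, so the minimality hypothesis places the matching pair $(\LM(h), \sig(\gamma))$ in $\LD(\mathcal G)$, yielding some $g'^{[\beta']} \in \mathcal G$ that regular sig-reduces $g^{[\beta]}$ via the same $a, b$ --- contradicting the reducedness of $\mathcal G$. The syzygy case is handled analogously with syz-reducers.

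I would then split into two cases depending on whether $\sigma_0$ lies in $\sig(\mathcal H) \coloneqq \{\sig(\gamma) : h^{[\gamma]} \in \mathcal H,\, \gamma \neq 0\}$. If $\sigma_0 \in \sig(\mathcal H)$, pick $h^{[\gamma]} \in \mathcal H$ at that signature; by reducedness, $h^{[\gamma]}$ is also regular sig-reduced by $\mathcal H$. Applying Lemma~\ref{lemma:s-reductions} to $\mathcal H$ as a strong Gröbner basis up to $\sigma_0$, with the pair $g^{[\beta]}, h^{[\gamma]}$, then forces $g = ch$ for some $c \in K^\times$, contradicting $\LM(g) \neq \LM(h)$. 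If instead $\sigma_0 \notin \sig(\mathcal H)$, the strong Gröbner basis property of $\mathcal H$ still provides a reducer of $g^{[\beta]}$; combined with the regular-reducedness just established, the reducer's composite signature must equal $\sigma_0$, and since $\sigma_0 \notin \sig(\mathcal H)$ it arises from an element of $\mathcal H$ with strictly smaller signature and nontrivial multipliers. By minimality, this transfers to a reducer in $\mathcal G \setminus \{g^{[\beta]}\}$ that contradicts the reducedness of $\mathcal G$.

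The hardest point I expect is the subcase $\sigma_0 \notin \sig(\mathcal H)$ with $g \neq 0$: reducedness of $\mathcal G$ forbids only \emph{top} and regular sig-reducers in $\mathcal G \setminus \{g^{[\beta]}\}$, while the strong Gröbner basis property a priori yields only a sig-reducer at some, possibly non-leading, element of $\supp(g)$. To force a top reducer, I would additionally invoke that $\{h : h^{[\gamma]} \in \mathcal H,\, h \neq 0\}$ is a classical Gröbner basis of $I$, so $\LM(g)$ is divisible by $\LM(h)$ for some $h^{[\gamma]} \in \mathcal H$; combining this divisibility with regular-sig-reducedness of $g^{[\beta]}$ by $\mathcal H$ to pin the induced signature to exactly $\sigma_0$ should yield, after transfer through minimality, the top sig-reducer in $\mathcal G$ needed to close the argument.
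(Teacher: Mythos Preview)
Your approach via a minimal-signature counterexample is valid in outline but substantially more involved than the paper's. The paper uses a short ``ping-pong'' argument with no case split: given $g^{[\beta]} \in \mathcal G$ with $g \neq 0$, top sig-reduce it by $\mathcal H$ to obtain $h^{[\gamma]}$ and $a,b$ with $\lm(g) = \lm(ahb)$, $\sig(\beta) \succeq \sig(a\gamma b)$; then top sig-reduce $h^{[\gamma]}$ by $\mathcal G$ to obtain $g'^{[\beta']}$ and $a',b'$. Composing, $g'^{[\beta']}$ top sig-reduces $g^{[\beta]}$ via $aa',b'b$, so reducedness of $\mathcal G$ forces $g'^{[\beta']} = g^{[\beta]}$ and hence $a=a'=b=b'=1$, yielding $\LD(g^{[\beta]}) = \LD(h^{[\gamma]})$. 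The syzygy case is analogous. No well-ordering minimality, no split on whether $\sigma_0 \in \sig(\mathcal H)$, no appeal to Lemma~\ref{lemma:s-reductions}.

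Your argument is correct through the transfer step and the case $\sigma_0 \in \sig(\mathcal H)$. But the gap you flag in the case $\sigma_0 \notin \sig(\mathcal H)$, $g \neq 0$ is genuine, and your proposed fix does not close it: classical Gr\"obner divisibility produces $h^{[\gamma]}$ and $a,b$ with $\lm(ahb) = \lm(g)$ but says nothing about $\sig(a\gamma b)$, while regular-sig-reducedness of $g^{[\beta]}$ by $\mathcal H$ rules out only $\sig(a\gamma b) \prec \sigma_0$, not $\sig(a\gamma b) \succ \sigma_0$. So you cannot ``pin the induced signature to exactly $\sigma_0$'' from these two ingredients alone. What is actually needed is that a strong Gr\"obner basis supplies \emph{top} sig-reducers, i.e., one can choose $h^{[\gamma]},a,b$ with $\lm(ahb)=\lm(g)$ and $\sig(a\gamma b)\preceq\sig(\beta)$ simultaneously. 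The paper simply invokes this (writing ``(top) sig-reducible'') and builds the whole proof around it; once you grant it, your second case closes immediately---and in fact the entire minimal-counterexample scaffolding becomes unnecessary, since the ping-pong argument already works for arbitrary $g^{[\beta]}$.
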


\begin{proof}
Let $g^{[\beta]} \in \mathcal G$.
We will show that there exists $h^{[\gamma]} \in \mathcal H$ such that $\LD(g^{[\beta]}) = \LD(h^{[\gamma]})$, implying that $\LD(\mathcal G) \subseteq \LD(\mathcal H)$.
The assertion of the lemma then follows by the symmetry of our argument.

We distinguish between two cases: $g = 0$ and $g \neq 0$.
If $g \neq 0$, then it is (top) sig-reducible by $\mathcal H$ because $\mathcal H$ is a strong Gröbner basis.
This means that there exist $h^{[\gamma]} \in \mathcal H$ and $a,b \in \<X>$ such that 
$$
    \lm(g) = \lm(ahb) \quad \text{ and } \quad \sig(\beta) \succeq \sig(a \gamma b).
$$
Since $\mathcal G$ is also a strong Gröbner basis, there exist $g'^{[\beta']} \in \mathcal G$ and $a',b' \in \<X>$ such that
$$
	 \lm(h) = \lm(a'g'b') \quad \text{ and }\quad  \sig(\gamma) \succeq \sig(a' \beta' b').
$$
Combining these two statements yields
\[
	\lm(g) = \lm(ahb) = \lm(aa' g' b' b) \quad \text{ and }\quad \sig(\beta) \succeq \sig(a \gamma b) \succeq \sig(a a' \beta' b' b).
\]
Now, if $g^{[\beta]} \neq g'^{[\beta']}$, then $g'^{[\beta']}$ could be used to top sig-reduce $g^{[\beta]}$ -- a contradiction to the fact that $\mathcal G$ is reduced.
So, $g^{[\beta]} = g'^{[\beta']}$, which implies that $a = a' = b = b' = 1$, and therefore, $\LD(g^{[\beta]}) = \LD(h^{[\gamma]})$.
The case $g = 0$ follows along the same lines, replacing top sig-reducibility by top syz\nobreakdash-reducibility. 
\end{proof}

Using Lemma~\ref{lem:unique-leading-data}, we can now prove the uniqueness of reduced strong Gröbner bases.

\begin{proposition}
If $\mathcal G$, $\mathcal H$ are reduced strong Gröbner bases of $I\Sig$, then $\mathcal G = \mathcal H$.
\end{proposition}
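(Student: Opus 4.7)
The plan is to strengthen Lemma~\ref{lem:unique-leading-data} to set-level equality. First I would observe that within any reduced strong Gröbner basis the leading-data map is injective: two distinct elements of $\mathcal{G}$ sharing the same $\LD$ would top sig-reduce (or top syz-reduce, if both polynomial parts vanish) each other via $a = b = 1$, violating condition~(3) of reducedness. Combined with Lemma~\ref{lem:unique-leading-data}, this yields a canonical bijection between $\mathcal{G}$ and $\mathcal{H}$ pairing elements with identical leading data. It therefore suffices to show that any such paired elements $g^{[\beta]} \in \mathcal{G}$ and $h^{[\gamma]} \in \mathcal{H}$ satisfy $g = h$ and $\beta = \gamma$.

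For the polynomial parts in the case $g \neq 0$, I would invoke Lemma~\ref{lemma:s-reductions} with $\mathcal{G}$ as the ambient strong Gröbner basis up to signature $\sig(\beta) = \sig(\gamma)$. The element $g^{[\beta]}$ is regular sig-reduced by $\mathcal{G}$ by condition~(2) of reducedness, and for $h^{[\gamma]}$ I would argue that any hypothetical regular sig-reducer $g'^{[\beta']} \in \mathcal{G}$ could be transported via the leading-data bijection to its partner $h'^{[\gamma']} \in \mathcal{H}$, which would regular sig-reduce $h^{[\gamma]}$ with the same multipliers $a, b$, contradicting reducedness of $\mathcal{H}$. Lemma~\ref{lemma:s-reductions} then forces $g = c \cdot h$ for some nonzero $c \in K$, and monicity pins down $c = 1$. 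If instead $g = 0$, the leading-data match forces $h = 0$ as well, and the polynomial parts agree trivially.

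To conclude $\beta = \gamma$, I would consider the difference $\delta \coloneqq \beta - \gamma$. Because both labels are monic with the same signature, $\sig(\delta) \prec \sig(\beta)$, and $\overline{\delta} = g - h = 0$ by the previous step. Assuming $\delta \neq 0$ for contradiction, the strong Gröbner basis property of $\mathcal{G}$ produces $0^{[\gamma']} \in \mathcal{G}$ and $a, b \in \<X>$ with $\sig(a \gamma' b) \in \supp(\delta) \subseteq \supp(\beta) \cup \supp(\gamma)$. In the first subcase $0^{[\gamma']}$ syz-reduces $g^{[\beta]}$, and since $\sig(\gamma') \preceq \sig(a \gamma' b) \prec \sig(\beta)$ guarantees $0^{[\gamma']} \neq g^{[\beta]}$ regardless of whether $g$ vanishes, this contradicts reducedness of $\mathcal{G}$. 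In the symmetric subcase, transporting $0^{[\gamma']}$ across the leading-data bijection into $\mathcal{H}$ produces a syz-reducer of $h^{[\gamma]}$ inside $\mathcal{H} \setminus \{h^{[\gamma]}\}$, again a contradiction.

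The main obstacle will be the need to move reducers between $\mathcal{G}$ and $\mathcal{H}$ via the leading-data bijection — this is required both for invoking Lemma~\ref{lemma:s-reductions} in the polynomial-part argument and for the syz-reduction argument in the signature-part argument — together with the auxiliary verification that the reducer produced in each subcase is genuinely distinct from the reducee, which relies on the strict signature drop $\sig(\gamma') \prec \sig(\beta)$ afforded by $\sig(\delta) \prec \sig(\beta)$.
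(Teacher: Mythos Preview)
Your overall strategy matches the paper's, and the arguments for injectivity of $\LD$ on a reduced basis and for $g = h$ via Lemma~\ref{lemma:s-reductions} are correct. However, there is a genuine gap in your argument for $\beta = \gamma$: you assert that ``both labels are monic with the same signature,'' but Definition~\ref{def:reduced}(1) only forces the module part $\beta$ to be monic when $g = 0$. When $g = h \neq 0$, the leading coefficients of $\beta$ and $\gamma$ are not constrained by reducedness, so you cannot conclude $\sig(\delta) \prec \sig(\beta)$ directly, and your distinctness argument $\sig(\gamma') \preceq \sig(a\gamma'b) \prec \sig(\beta)$ breaks down.

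The paper addresses exactly this point by writing $\alpha = \beta - c\gamma$ with $c$ chosen to cancel the signatures and then splitting into cases. The case $c \neq 1$ (which can only occur when $g \neq 0$) is handled by a different mechanism: $\overline\alpha = (1-c)g$ has $\lm(\overline\alpha) = \lm(g)$, and since $\sig(\alpha) \prec \sig(\beta)$, any sig-reducer of $\overline\alpha^{[\alpha]}$ in $\mathcal G$ would \emph{regular} sig-reduce $g^{[\beta]}$, contradicting reducedness. Your syz-reduction route can also be patched---if the leading coefficients differ then $g \neq 0$, so $0^{[\gamma']} \neq g^{[\beta]}$ follows from the polynomial parts rather than from a strict signature drop---but as written the monicity claim is unjustified and the case is not covered.
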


\begin{proof}
Lemma~\ref{lem:unique-leading-data} implies that $\LD(\mathcal G) = \LD(\mathcal H)$.
Then, let $g^{[\beta]} \in \mathcal G$ and $h^{[\gamma]} \in \mathcal H$ be such that $\lm(g) = \lm(h)$ and $\sig(\beta) = \sig(\gamma)$.

First, we show that $g = h$.
Since $\mathcal G$ and $\mathcal H$ are reduced, $g^{[\beta]}$ is regular sig-reduced by $\mathcal G$ and $h^{[\gamma]}$ is regular sig-reduced by $\mathcal H$.
But because $\LD(\mathcal G) = \LD(\mathcal H)$, $g^{[\beta]}$ is also regular sig-reduced by $\mathcal H$ (and analogously for $h^{[\gamma]}$ and $\mathcal G$).
With this, Lemma~\ref{lemma:s-reductions} implies that $g$ and $h$ are scalar multiplies of each other,
and we get $g = h$ because both are monic if they are nonzero.

Finally, we show that also $\beta = \gamma$.
To this end, consider $\alpha = \beta - c\cdot \gamma$ with $c \in K$ such that $\sig(\alpha) \prec \sig(\beta)$.
We now distinguish between the two cases: $c = 1$ and $c \neq 1$.
If $c = 1$, then $\alpha$ is a syzygy, as 
$$
    \overline \alpha = \overline \beta - 1  \cdot \overline \gamma = g - g = 0.
$$
Now, if $\alpha \neq 0$, then $\sig(\alpha)$ is a syzygy signature appearing in $\beta$ or $\gamma$, but this would violate the condition that $\mathcal G$ and $\mathcal H$ are syz-reduced.
If $c \neq 1$, then note that $g \neq 0$ as otherwise the first condition of Definition~\ref{def:reduced} would be violated.
Then, $\overline \alpha = g - c\cdot g$ has the same nonzero leading monomial as $g$.
Moreover, $\overline{\alpha}^{[\alpha]}$ is sig-reducible by $\mathcal G$.
But the same element can be used to regular sig-reduce $g^{[\beta]}$, which is a contradiction to the fact that $\mathcal G$ is reduced.
\end{proof}

Just like classical (reduced) Gröbner bases in the free algebra, also (reduced) strong Gröbner bases can be infinite.
However, strong Gröbner bases have the advantage that they can be computed in an incremental fashion by increasing signature,
analogous to how homogeneous Gröbner bases can be computed degree by degree.
Stopping the computation of a strong Gröbner basis at a given signature $\sigma \in \MFr$ yields a strong Gröbner basis up to that signature. 
Under mild assumptions on the bimodule ordering, this resulting basis is guaranteed to be finite.

In what follows, we denote by $\mathcal G_\sigma$,  with fixed $\sigma \in \MFr$, the set of all elements from the reduced strong Gröbner basis $\mathcal G$ of $I\Sig$ with signature smaller than $\sigma$, that is,
$$
    \mathcal G_\sigma = \{ g^{[\beta]} \in \mathcal G \mid \sig(\beta) \prec \sigma\}.
$$
Note that $\mathcal G_\sigma$ forms a strong Gröbner basis up to signature $\sigma$.
More importantly, this set is finite for any choice of $\sigma$ provided that the underlying bimodule ordering is fair.

\begin{lemma}\label{lem:sig-gb-is-finite}
If the bimodule ordering is fair, then $\mathcal G_\sigma$ is finite for any $\sigma \in \MFr$.
\end{lemma}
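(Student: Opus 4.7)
Proof plan. The strategy is to bound $\mathcal G_\sigma$ by showing that the signature map
$g^{[\beta]} \mapsto \sig(\beta)$ is injective on the reduced strong Gröbner basis $\mathcal G$. Once this is established, the injection restricts to an injection from $\mathcal G_\sigma$ into the set $\{\mu \in \MFr \mid \mu \prec_\Sigma \sigma\}$, which is finite by fairness of $\prec_\Sigma$. The lemma then follows immediately.

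To prove that signatures are unique within $\mathcal G$, I would argue by contradiction: suppose $g^{[\beta]}, h^{[\gamma]} \in \mathcal G$ are two distinct elements with $\sig(\beta) = \sig(\gamma)$ and derive a violation of one of the reducedness conditions of Definition~\ref{def:reduced}. The natural case split is on the polynomial parts. If $g \neq 0$ and $h \neq 0$, both elements are regular sig-reduced by $\mathcal G$ (condition~2), so Lemma~\ref{lemma:s-reductions} gives $g = c \cdot h$, and since both polynomials are monic, $g = h$. Then, however, taking $a = b = 1$ shows that $h^{[\gamma]} \in \mathcal G \setminus \{g^{[\beta]}\}$ top sig-reduces $g^{[\beta]}$ (the conditions $\lm(h) = \lm(g)$ and $\sig(\gamma) \preceq \sig(\beta)$ are both satisfied), contradicting condition~3. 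The mixed case $g = 0$, $h \neq 0$ and the case $g = h = 0$ are simpler: one checks directly that $h^{[\gamma]}$ syz-reduces $g^{[\beta]}$ with $a = b = 1$, since $\sig(\gamma) = \sig(\beta)$ lies in $\supp(\beta)$, again contradicting condition~3.

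The main obstacle is the nonzero/nonzero case, where one genuinely needs Lemma~\ref{lemma:s-reductions}: the fact that two regular-sig-reduced labeled polynomials with equal signature differ only by a scalar. The remaining two cases are essentially bookkeeping with the definition of syz-reducibility. After the injectivity is secured, the finiteness conclusion is a one-line consequence of fairness, so no further analysis of the (possibly infinite) basis $\mathcal G$ itself is needed.
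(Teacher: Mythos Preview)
Your approach is the same as the paper's: show that the signature map is injective on the reduced strong Gröbner basis $\mathcal G$ (the paper states this in one line, citing Lemma~\ref{lemma:s-reductions} and Definition~\ref{def:reduced}), then invoke fairness to bound $\mathcal G_\sigma$.

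One small slip in your case analysis: in the mixed case $g = 0$, $h \neq 0$ you write that $h^{[\gamma]}$ syz-reduces $g^{[\beta]}$, but syz-reduction requires the reducer to have zero polynomial part, so this is the wrong direction. You should instead observe that $g^{[\beta]} = 0^{[\beta]} \in \mathcal G \setminus \{h^{[\gamma]}\}$ top syz-reduces $h^{[\gamma]}$ (since $\sig(\beta) = \sig(\gamma) \in \supp(\gamma)$), contradicting condition~3 for $h^{[\gamma]}$. Alternatively, you can dispense with the mixed case entirely: condition~2 of Definition~\ref{def:reduced} holds vacuously for elements with zero polynomial part, so Lemma~\ref{lemma:s-reductions} applies and yields $g = c \cdot h$ with $c \neq 0$, which is already impossible when exactly one of $g,h$ vanishes.
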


\begin{proof}
Lemma~\ref{lemma:s-reductions} and Definition~\ref{def:reduced} imply that the reduced strong Gröbner basis cannot contain multiple elements with the same signature.
With this, the result follows from the fact that there are only finitely many signatures smaller than $\sigma$.
\end{proof}

To actually compute the finite set $\mathcal G_\sigma$, one can use~\cite[Alg.~2]{Hof23}, which allows to compute 
a strong Gröbner basis of $I\Sig$ up to any given signature $\sigma$ provided that the used bimodule ordering is fair, see~\cite[Thm.~3.3.16]{Hof23} and~\cite[Rem.~3.3.17]{Hof23}.
Note, however, that the output of~\cite[Alg.~2]{Hof23} need not be reduced.
To actually obtain the reduced strong Gröbner basis $\mathcal G_\sigma$, one has to interreduce this output.

\begin{remark}\label{rem:sig-gb}
The algorithm mentioned above is inefficient due to its reliance on costly module arithmetic.
A more efficient way of computing the set $\mathcal G_\sigma$
involves using the approach outlined in~\cite[Sec.~3.3.3]{Hof23}, which uses algorithms for computing \emph{signature Gröbner bases}.
These algorithms work with pairs $(f,\sig(\alpha))$ rather than labeled polynomials
$f^{[\alpha]}$ and reconstruct the full module representation $\alpha$ a posteriori.
This results in a substantial speedup in practical computations.
\end{remark}
 
\section{A Modular Gr\"obner Basis Algorithm in Free Algebras}
\label{sec:modgb}

In this section, we extend the modular Gröbner basis algorithm from
commutative polynomial rings~\cite{A, elbert83, pfister2011} and $G$-algebras~\cite{DEVT20}
to free algebras~$\Q\<X>$.


To this end, we first introduce some notation.
Fix an integer $N \geq 2$. We denote $\Z_N=\Z/N\Z$. 
For $\frac{a}{b}\in\Q$ with $\gcd(a,b)=1$ and $\gcd(b,N)=1$, we set
$$
    \left(  \frac{a}{b}\right)_{N} = (a+N\mathbb{Z})(b+N\mathbb{Z})^{-1} \in\Z_N.
$$
If $f = \sum_{i=1}^d c_i w_i \in \Q\<X>$ is a polynomial such that $N$ is
coprime to the denominator of any coefficient $c_i$ of $f$, then its \emph{reduction
modulo $N$} is the element 
$$
    f_{N} =  \sum_{i=1}^d \left(c_i\right)_N w_i \in \Z_{N}\<X>.
$$
For a finitely generated ideal $I = \< f_1,\dots,f_r> \ideal \Q\<X>$, we let
$$
    I_N = \< (f_1)_N, \dots, (f_r)_N > \ideal \Z_N\<X>,
$$
if $N$ is coprime to the denominator of any coefficient of any $f_i$.

\begin{remark}
In general, the ideal $I_N$ depends on the chosen generators of $I$.
Different sets of generators of $I$ can lead to different ideals $I_N$.
Therefore, in what follows, we assume a fixed set of generators for $I$
and all further notions are to be understood w.r.t.~the chosen set of generators.
\end{remark}


\subsection{Problems with the Classical Approach}
\label{sec:problems-with-classical-approach}

The basic idea of a modular Gröbner basis algorithm in the classical settings of commutative polynomials and G-algebras is as follows:
To compute a rational Gröbner basis of an ideal $I$,
first choose a set $\mathcal P$ of primes and compute a Gröbner basis $\mathcal G_p$ of the reduction $I_p$ for every $p \in \mathcal P$.
Then, reconstruct from these modular images a Gröbner basis $\mathcal G$ of $I$ over the rationals.
In the following, we illustrate that the infinite nature of Gröbner bases in free algebras renders this approach unfeasible in our setting.

In a modular Gröbner basis algorithm, it is crucial that most algebraic information about the ideal $I$ is preserved when transitioning to $I_p$.
Primes for which this is the case are called \emph{lucky}~\cite{A,DEVT20} and only from lucky primes a reconstruction of the original Gröbner basis 
is possible.
The main property of a lucky prime $p$ is that $\LM(\mathcal G) = \LM(\mathcal G_p)$, and an important
result in the classical settings of commutative polynomials and $G$-algebras is that, for a fixed ideal,
only finitely many primes are not lucky, see~\cite[Thm.~5.13]{A} and~\cite[Lem.~16]{DEVT20}.

In the free algebra, however, this is no longer true. 
In $\Q\<X>$, there exist finitely generated ideals for which there are infinitely many primes that are not lucky.
In fact, as the following example shows, there even exist principal ideals for which no prime is lucky regardless of the chosen monomial ordering.

\begin{example}
\label{ex:infinitely-many-unlucky-primes}
   Consider the ideal 
   $$
    I = \langle xyx - xy - y\rangle \ideal \Q\<x,y>.
    $$
    The reduced Gröbner basis of $I$ w.r.t.~any monomial ordering is given by 
    \[
        \mathcal G = \left\{ x y^n x + \frac{F_{n-1}}{F_n} y^n x - \frac{F_{n+1}}{F_n} xy^n - y^n \mathrel{\bigg|} n \geq 1\right\},
    \]
    where $(F_n)_{n \geq 0}$ denotes the Fibnoacci sequence, that is, $F_0 = 0, F_1 = 1$ and $F_{n+1} = F_{n} + F_{n-1}$ for all $n > 1$.
    
    To see this, first note that, regardless of the concrete monomial ordering, we always have $xy^n x \succ y^n x\succ y^n$ and $xy^n x \succ xy^n \succ y^n$, respectively.
    Thus, $\mathcal G$ is interreduced and all elements in $\mathcal G$ are monic.

    In the following, we abbreviate $a_n = \frac{F_{n-1}}{F_n}$ and $b_n = \frac{F_{n+1}}{F_n}$, and we let
    $g_n = xy^nx + a_n y^n x - b_n xy^n - y^n$ for $n \geq 1$.
    Then, we can see that $\<\mathcal G> = I$, because $xyx - xy - y = g_1 \in \mathcal G$ and we get inductively that
    \begin{align}
    \label{eq:recurion-generator}
        g_{n+1} = \frac{-1}{b_n}\bigg( g_n y (x - 1) - (x + a_n) y^n g_1 \bigg) \in I,
    \end{align}
    for all $n \geq 2$.
    A proof of the identity~\eqref{eq:recurion-generator} is given in Appendix~\ref{appendix}.
    
    Finally, to verify that $\mathcal G$ is a Gröbner basis, one can check that all S\nobreakdash-polynomials of $\mathcal G$ reduce to zero.    
    This is a straightforward, if tedious, calculation, which is detailed in Appendix~\ref{appendix}.
    
    Thus, 
    $$
        \<\LM(I)> = \<\LM(\mathcal G)> = \<x y^n x \mid n \geq 1>.
    $$
    
    Now, fix an arbitrary prime $p\in\N$ and let $N$ be the smallest positive integer such that $p \mid F_N$.
    It is a well-known property of the Fibonacci sequence that such an $N$ always exists~\cite{Wall60}.
    We then claim that 
    $$
        y^N x - x y^N \in I_p = \< xyx - xy -y> \ideal \Z_p\<X>,
    $$ 
    which implies that $\LM(I) \neq \LM(I_p)$, and therefore, that $p$ cannot be lucky.

    To verify this claim, we first note that the representation~\eqref{eq:recurion-generator} implies that 
    $(g_n)_p \in I_p$ for all $n \leq N-1$, because $a_n$ and $1/b_n$ are well-defined modulo $p$ for $1 \leq n \leq N-2$.
    Therefore,
    \begin{align*}
        (g_{N-1})_p &= xy^{N-1}x + \underbrace{(a_{N-1})_p}_{=\,-1} y^{N-1} x + \underbrace{(b_{N-1})_p}_{=\,0} xy^{N-1} - y^{N-1}\\
                    &= xy^{N-1}x - y^{N-1} x - y^{N-1} \in I_p.
    \end{align*}
    With this, it is straightforward to verify that 
    $$
        y^N x - x y^N = (g_{N-1})_p y (1-x) + (x-1)y^{N-1} (g_1)_p \in I_p.
    $$
    This shows that $\LM(I) \neq \LM(I_p)$, and thus, that no prime is lucky for $I$ (regardless of the monomial ordering). 
\end{example}

We summarize the conclusion of the example in the following theorem.

\begin{theorem}
For $|X| > 1$, there exist finitely generated ideals in $\Q\<X>$ for which no prime is lucky regardless of the monomial ordering.
\end{theorem}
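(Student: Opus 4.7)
The plan is to package Example~\ref{ex:infinitely-many-unlucky-primes} as the witness. The example already supplies all the technical content: a principal ideal in $\Q\<x,y>$ whose reduced Gröbner basis (independent of the monomial ordering) consists of the elements $g_n = x y^n x + a_n y^n x - b_n x y^n - y^n$ for $n \geq 1$, with leading monomials $x y^n x$, and the observation that for every prime $p$ there is a smallest $N \geq 1$ with $p \mid F_N$ (existence of such an $N$ is a classical property of the Fibonacci sequence), yielding $y^N x - x y^N \in I_p$ while $y^N x \notin \<\LM(\mathcal G)>$. Hence $\<\LM(I_p)> \neq \<\LM(I)>$, so $p$ fails to be lucky.

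Thus for $|X| = 2$, simply take $I = \<xyx - xy - y>$ and invoke the example: every prime is unlucky, regardless of the chosen monomial ordering. This handles the smallest case and is really the whole content of the theorem.

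For general $|X| > 1$, write $X = \{x_1, \dots, x_n\}$ with $n \geq 2$ and set $x = x_1$, $y = x_2$. I would argue that the same ideal $I = \<xyx - xy - y> \ideal \Q\<X>$ works: since the generator involves only $x_1$ and $x_2$, a straightforward induction on signature-based completion shows that both the reduced Gröbner basis of $I$ and its reduction modulo any prime $p$ are supported on monomials in $\<x_1,x_2>$ alone. Any fair monomial ordering on $\<X>$ restricts to a fair monomial ordering on $\<x_1,x_2>$, so the analysis of the example transports verbatim, and we again conclude that no prime is lucky for $I$.

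The only step that requires a mild argument (rather than direct quotation of the example) is the reduction from general $n$ to $n=2$, but this is essentially routine: the subalgebra $\Q\<x_1,x_2> \hookrightarrow \Q\<X>$ is compatible with the Gröbner basis construction since the defining relation of $I$ lives in the subalgebra and overlaps can only produce elements in that subalgebra. There is no real obstacle — the genuinely hard work (finding a principal ideal whose Gröbner basis denominators encode \emph{all} primes simultaneously via the Fibonacci sequence) was done in the example itself.
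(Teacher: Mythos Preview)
Your proposal is correct and matches the paper's approach: the paper simply presents the theorem as a summary of Example~\ref{ex:infinitely-many-unlucky-primes}, with no separate proof given. Your extra paragraph extending the two-variable example to arbitrary $|X| > 2$ (by observing that the Buchberger process stays inside the subalgebra $\Q\<x_1,x_2>$) fills in a detail the paper leaves implicit; one small slip is that you restrict to \emph{fair} monomial orderings, whereas the example works for \emph{any} monomial ordering on $\<X>$, and any such ordering restricts to a monomial ordering on $\<x_1,x_2>$---so just drop the word ``fair'' and the argument goes through unchanged.
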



%

The underlying reason for the phenomenon occurring in Example~\ref{ex:infinitely-many-unlucky-primes}
is that (reduced) Gröbner bases in free algebras can be infinite.
In practice, however, we can only compute finite subsets of such infinite bases anyway. 
This limitation to finite computations helps mitigate the issue of not having any lucky primes but it introduces a new challenge: 
How can we measure progress in an infinite Gröbner basis computation?

Addressing this issue is crucial, as it involves several related considerations. 
In particular, how can we effectively halt the enumeration of (potentially infinite) Gröbner bases ${\mathcal G}_p$ of $I_p$ for different moduli $p$ \emph{at the same stage}?
This question is relevant for the reconstruction process, where we have to identify Gröbner basis elements from different $\mathcal G_p$ corresponding to each other. 
Typically, this matching is done by selecting elements across different $\mathcal G_p$ with the same leading monomial.
But given an element $f \in \mathcal G_p$, how do we determine during the computation of $\mathcal G_q$ ($q \neq p$) when -- or whether at all (in case one of the primes is not lucky) -- a corresponding element $g$ with $\LM(f) = \LM(g)$ will be computed?

Furthermore, assuming that we can resolve the issue of identifying corresponding elements and that we can reconstruct a partial Gröbner basis from the modular images, how can we characterize (and then also verify) that this finite set forms part of an infinite (reduced) Gröbner basis, without knowing the full infinite basis?

For homogeneous ideals, both problems can be handled by using the degree as a measure of computational progress.
First, we compute all partial Gröbner bases up to a fixed degree $D$ and then we can verify that the reconstructed
result is indeed a homogeneous Gröbner basis up to degree $D$. 

In the case of inhomogeneous ideals, however, we are not aware of any such direct measure.
One could, of course, homogenize an inhomogeneous ideal, compute partial homogeneous Gröbner bases, and then dehomogenize the final result. 
However, it has been observed that this process can be highly inefficient in the free algebra, transforming, in the worst case, a finite Gröbner basis computation into an infinite one~\cite{Ufn08}.

In the following section, we describe how signatures and signature-based algorithms can be used to address these issues. 
Specifically, we will use signatures as a measure of computational progress, naturally generalizing the degree used in the homogeneous case.

\subsection{A Signature-based Approach}

For what follows, we fix a family of polynomials $(f_1,\dots,f_r) \in \Q\<X>^r$, which generates an ideal $I = \langle f_1,\dots,f_r \rangle$ and we
let 
$$
    I\Sig = \left\< f_1^{[\varepsilon_1]},\dots,  f_r^{[\varepsilon_r]}\right>
$$ be the labeled module generated by $f_1,\dots,f_r$.

For an integer $N\geq 2$ for which the reductions $(f_1)_N,\dots,(f_r)_N$ are well\nobreakdash-defined,
we let $I_N\Sig$ be the labeled module generated by $(f_1)_N, \dots, (f_r)_N$ over $\Z_N\<X>$, that is,
$$
    I_N\Sig =  \left\< (f_1)_N^{[\varepsilon_1]},\dots,  (f_r)_N^{[\varepsilon_r]}\right>,
$$
generated as a $\Z_N\<X>$-bimodule.

\begin{remark}
\label{rem:well-defined-reduction}
There are only finitely many $N$ for which $(f_1)_N,\dots,(f_r)_N$ are not well-defined,
namely the divisors of the denominators of coefficients in the $f_i$.
For a simpler presentation, we assume in the following that all chosen (prime) numbers $N$ are such that $I_N\Sig$ is well-defined.
\end{remark}

The goal of our modular approach is to reconstruct the reduced strong Gröbner basis of $I\Sig$ up to a fixed signature $\sigma$ from its modular images.
We need to make an assumption on the bimodule ordering in use.

\begin{assumption}
Throughout the rest of this article, we assume that the bimodule ordering on $\MFr$ is fair.
\end{assumption}

With this assumption, our approach provides two main advantages over the classical method:
First, by Lemma~\ref{lem:sig-gb-is-finite}, all involved reduced strong Gröbner basis computations are guaranteed to be finite,
and second, we can leverage the data of signatures to assess computational progress and to identify corresponding Gröbner basis elements across different modular images.
Furthermore, as we will discuss in Section~\ref{sec:verification}, the final verification test in this approach will be both more general and efficient than the classical method.

Just like the classical approach, also our method relies on the assumption that essential algebraic information is 
retained when transitioning from $I\Sig$ to the reduction $I\Sig_p$ modulo a prime $p$.
In particular, the information encoded within the signatures has to be preserved during this transition.
This leads to the concept of \emph{$\sigma$-lucky primes} introduced below, which adapts the definition of lucky primes to our signature-based setting.

But first, we fix some notation.
For a prime $p \in \N$, we denote by $\mathcal G$ and $\mathcal G_p$ the reduced strong Gröbner basis of $I\Sig$ and $I\Sig_p$, respectively.
Recall that, for fixed $\sigma \in \MFr$, we denote by $\mathcal G_\sigma$ (resp.~$\mathcal G_{p,\sigma}$) the set
$$
    \mathcal G_\sigma = \{ g^{[\beta]} \in \mathcal G \mid \sig(\beta) \prec \sigma\}
$$
(resp.~$\mathcal G_{p,\sigma} = \{ g^{[\beta]} \in \mathcal G_p \mid \sig(\beta) \prec \sigma\}$).

We also extend the concept of reduction modulo $N$ from polynomials to module elements.
If $\alpha = \sum_{i=1}^d c_i \mu_i \in \Sigma$ is a module element and $N \geq 2$ is an integer such that $N$ is
coprime to the denominator of any coefficient $c_i$ of $\alpha$, then its \emph{reduction
modulo $N$} is the element 
$$
    \alpha_{N} =  \sum_{i=1}^d \left(c_i\right)_N \mu_i.
$$
Analogously, if $\mathcal F \subseteq I\Sig$ is a set of labeled polynomials such that $\alpha_N$ is well-defined
for all $f^{[\alpha]} \in\mathcal F$, we set
$$
    \mathcal F_N = \{ (f_N)^{[\alpha_N]} \mid f^{[\alpha]} \in \mathcal F\}.
$$
Note that the condition on $\alpha_N$ (in combination with Remark~\ref{rem:well-defined-reduction}) implies that also $f_N$ is well-defined.

We can now define the concept of $\sigma$-lucky primes.

\begin{definition}\label{def:sigma-lucky}
Let $\sigma \in \MFr$. 
A prime $p \in \N$ is \emph{$\sigma$-lucky} if
\begin{enumerate}
    \item $(\mathcal G_\sigma)_p$ is well-defined;\label{cond:lucky1}
    \item $\LD(\mathcal G_\sigma) = \LD(\mathcal G_{p,\sigma})$.\label{cond:lucky2}
\end{enumerate}
Otherwise, $p$ is called \emph{$\sigma$-unlucky}.
\end{definition}

Thus, $\sigma$-lucky primes preserve the information encoded in the leading data of the reduced strong Gröbner basis when working in $I\Sig_p$ instead of $I\Sig$.
This is more strict than the classical definition of lucky primes.
Nevertheless, the set of $\sigma$-unlucky primes is still finite.
This follows from the following, more general, result, which
relates the reduced strong Gröbner basis $\mathcal G_{p,\sigma}$ of $I\Sig_p$
to the reduction $(\mathcal G_\sigma)_p$ 
of the reduced strong Gröbner basis $\mathcal G_\sigma$ of $I\Sig$.

\begin{lemma}\label{lem-ld-prime}
For any prime $p$, if $\LD((\mathcal G_\sigma)_p) = \LD(\mathcal G_\sigma)$, then $(\mathcal G_\sigma)_p$ = $\mathcal G_{p,\sigma}$.
\end{lemma}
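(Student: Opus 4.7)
The plan is to apply the uniqueness of the reduced strong Gröbner basis up to signature $\sigma$: by showing that $(\mathcal G_\sigma)_p$ itself is a reduced strong Gröbner basis of $I_p\Sig$ up to~$\sigma$, the conclusion $(\mathcal G_\sigma)_p = \mathcal G_{p,\sigma}$ follows.

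First, I would observe that $(\mathcal G_\sigma)_p \subseteq I_p\Sig$, since the identity $\overline{\beta} = g$ over $\Q\<X>$ descends to $\overline{\beta_p} = g_p$ over $\Z_p\<X>$. Next, I would strengthen the set-level hypothesis $\LD((\mathcal G_\sigma)_p) = \LD(\mathcal G_\sigma)$ to an element-wise statement. By Lemma~\ref{lemma:s-reductions}, the signatures of elements of the reduced basis $\mathcal G_\sigma$ are pairwise distinct, so $|\LD(\mathcal G_\sigma)| = |\mathcal G_\sigma|$. Combining this with the surjection $g^{[\beta]} \mapsto g_p^{[\beta_p]}$ and the bound $|(\mathcal G_\sigma)_p| \geq |\LD((\mathcal G_\sigma)_p)| = |\mathcal G_\sigma|$, a simple counting argument forces reduction modulo~$p$ to be bijective on $\mathcal G_\sigma$ and to preserve leading data element-wise.

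The crux of the proof is verifying the defining conditions of a strong Gröbner basis (Definition~\ref{def:strong-gb}) for $(\mathcal G_\sigma)_p$ in $I_p\Sig$ up to~$\sigma$. Given $f^{[\alpha]} \in I_p\Sig$ with $\sig(\alpha) \prec \sigma$, the strategy is to lift $\alpha$ to $\tilde{\alpha} \in \Sigma$ of the same signature, set $\tilde{f} = \overline{\tilde{\alpha}} \in I$, and apply the strong Gröbner basis property of $\mathcal G_\sigma$ to $\tilde{f}^{[\tilde{\alpha}]}$. Any sig-reducer $g^{[\beta]} \in \mathcal G_\sigma$ with witness $(a,b)$ produced this way has an image $g_p^{[\beta_p]} \in (\mathcal G_\sigma)_p$ with matching leading data; it is a sig-reducer of $f^{[\alpha]}$ as soon as the witness monomial $\lm(agb)$ lies in $\supp(f)$. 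The main obstacle is that this monomial may vanish modulo~$p$. I would handle this by iteratively performing such coefficient-vanishing reductions in $\Sigma$: each leaves the mod-$p$ image of $\tilde{f}^{[\tilde{\alpha}]}$ unchanged because the cancelling coefficient is a multiple of~$p$, and the well-ordering property of $\prec$ forces termination at a state where the witness survives mod~$p$ (otherwise $\tilde{f}$ would reduce to~$0$, contradicting $f \neq 0$). The syzygy case ($f = 0$) is handled analogously, exploiting the fact that $\sig(\alpha) = \sig(\tilde{\alpha})$ is preserved mod~$p$ and that the syzygy Gröbner basis inside $\mathcal G_\sigma$ provides a top syz-reducer of matching signature.

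Finally, the three conditions of reducedness (Definition~\ref{def:reduced}) transfer from $\mathcal G_\sigma$ to $(\mathcal G_\sigma)_p$ via element-wise $\LD$-preservation: monicness is inherited because the leading coefficients of nonzero $g$ (or of $\sig(\beta)$ in $\beta$, for syzygies) remain equal to~$1$ after reduction, and any hypothetical sig- or syz-reduction of $g_p^{[\beta_p]}$ by $(\mathcal G_\sigma)_p \setminus \{g_p^{[\beta_p]}\}$ would, via the leading-data correspondence, lift to a reduction of $g^{[\beta]}$ by $\mathcal G_\sigma \setminus \{g^{[\beta]}\}$, contradicting the reducedness of $\mathcal G_\sigma$. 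Invoking the uniqueness of the reduced strong Gröbner basis of $I_p\Sig$ up to~$\sigma$ then yields $(\mathcal G_\sigma)_p = \mathcal G_{p,\sigma}$.
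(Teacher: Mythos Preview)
Your proposal is correct, and its overall architecture matches the paper's: show that $(\mathcal G_\sigma)_p$ is itself a reduced strong Gröbner basis of $I_p\Sig$ up to~$\sigma$, then invoke uniqueness.  The genuine difference lies in how the strong Gröbner basis property is established.  The paper takes a one-line shortcut by forward-referencing the cover criterion (Theorem~\ref{thm:cover-theorem-free-algebra}): since that criterion characterizes strong Gröbner bases purely in terms of leading data, the hypothesis $\LD((\mathcal G_\sigma)_p)=\LD(\mathcal G_\sigma)$ immediately transfers the property from $\mathcal G_\sigma$ to $(\mathcal G_\sigma)_p$.  You instead verify Definition~\ref{def:strong-gb} from first principles via a lift-and-iterate argument, repeatedly cancelling monomials of $\tilde f$ whose coefficients vanish mod~$p$ until a sig-reducer survives.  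Your route is more elementary and self-contained---it avoids the forward reference and would work even before the cover criterion is available---but it is also considerably more delicate: the termination relies on the multiset extension of $\prec$ on supports, and the syzygy case really needs the observation that the signature of the running lift stays pinned at $\sig(\alpha)$ throughout (so that the top syz-reducer produced over~$\Q$ hits $\sig(\alpha)\in\supp(\alpha)$).  You also spell out the element-wise leading-data preservation and the reducedness transfer more carefully than the paper, which simply asserts that ``since $\mathcal G_\sigma$ is reduced, so is $(\mathcal G_\sigma)_p$.''  In short: the paper trades a heavier structural tool for brevity, while your argument stays closer to the definitions at the cost of more bookkeeping.
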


\begin{proof}
Let $p$ be such that $\LD((\mathcal G_\sigma)_p) = \LD(\mathcal G_\sigma)$.
By Theorem~\ref{thm:cover-theorem-free-algebra}, which we state and discuss in more detail later,
checking whether a set of labeled polynomials forms a strong Gröbner basis (up to signature $\sigma$)
only relies on information encoded in the leading data of this set.
Consequently, since $\mathcal G_\sigma$ is a strong Gröbner basis of $I\Sig$ up to signature $\sigma$,
the set $(\mathcal G_\sigma)_p$ forms a strong Gröbner basis of $I\Sig_p$ up to signature $\sigma$.
In fact, since $\mathcal G_\sigma$ is reduced, so is $(\mathcal G_\sigma)_p$.
With this, the uniqueness of the reduced strong Gröbner basis implies that 
$(\mathcal G_\sigma)_p$ = $\mathcal G_{p,\sigma}$.
\end{proof}

\begin{corollary}\label{cor finite} 
For any $\sigma \in \MFr$, the set of $\sigma$-unlucky primes is finite.
\end{corollary}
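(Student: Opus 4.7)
The plan is to show that the set of $\sigma$-unlucky primes is contained in a finite ``exceptional'' set determined by the coefficients appearing in the (finite) reduced strong Gröbner basis $\mathcal G_\sigma$. The key leverage is twofold: our fairness assumption makes $\mathcal G_\sigma$ finite by Lemma~\ref{lem:sig-gb-is-finite}, and Lemma~\ref{lem-ld-prime} reduces the second clause of Definition~\ref{def:sigma-lucky} (the equality $\LD(\mathcal G_\sigma) = \LD(\mathcal G_{p,\sigma})$) to the purely local condition $\LD((\mathcal G_\sigma)_p) = \LD(\mathcal G_\sigma)$ about how the coefficients of the elements of $\mathcal G_\sigma$ behave under reduction modulo $p$.

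First, I would collect all rational coefficients appearing in the elements of $\mathcal G_\sigma$, both in polynomial parts and in signature parts. Since $\mathcal G_\sigma$ is finite, this is a finite list, and therefore the set $P_1$ of primes dividing the denominator of at least one such coefficient is finite. For $p \notin P_1$ the reduction $(\mathcal G_\sigma)_p$ is well-defined, so the first clause of Definition~\ref{def:sigma-lucky} holds.

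Next, I would verify the second clause of Definition~\ref{def:sigma-lucky} outside a slightly larger finite set. For each $g^{[\beta]} \in \mathcal G_\sigma$ with $\beta \neq 0$, reducedness (Definition~\ref{def:reduced}) ensures that either $g \neq 0$ and $g$ is monic (so that $\LM(g_p) = \LM(g)$ for every $p \notin P_1$), or $g = 0$ and $\beta$ is monic (so that $\sig(\beta_p) = \sig(\beta)$ for every $p \notin P_1$). The only remaining danger is in the case $g \neq 0$: the coefficient of $\sig(\beta)$ in $\beta$ need not be a unit, so $\sig(\beta_p)$ could shift downward for primes $p$ that divide its numerator. Collecting these (finitely many) primes over all (finitely many) elements of $\mathcal G_\sigma$ into a finite set $P_2$, we obtain $\LD((\mathcal G_\sigma)_p) = \LD(\mathcal G_\sigma)$ for every $p \notin P_1 \cup P_2$. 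Applying Lemma~\ref{lem-ld-prime} yields $(\mathcal G_\sigma)_p = \mathcal G_{p,\sigma}$, and hence $\LD(\mathcal G_{p,\sigma}) = \LD((\mathcal G_\sigma)_p) = \LD(\mathcal G_\sigma)$, confirming the second clause of Definition~\ref{def:sigma-lucky}.

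The main subtlety is in the previous paragraph: the monicity condition in Definition~\ref{def:reduced} is disjunctive (it applies to $g$ or to $\beta$, but not necessarily to both), so when $g \neq 0$ one has to independently control the leading coefficient of the signature $\beta$ via $P_2$. Once this point is addressed, the argument is a bookkeeping step bounding the set of $\sigma$-unlucky primes by the finite set $P_1 \cup P_2$.
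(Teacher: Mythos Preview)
Your proposal is correct and follows essentially the same approach as the paper: use Lemma~\ref{lem:sig-gb-is-finite} to make $\mathcal G_\sigma$ finite, exclude the finitely many primes for which the reduction is undefined or for which $\LD((\mathcal G_\sigma)_p)\neq\LD(\mathcal G_\sigma)$, and then invoke Lemma~\ref{lem-ld-prime}. Your treatment is in fact more explicit than the paper's, which simply asserts that only finitely many primes fail $\LD((\mathcal G_\sigma)_p)=\LD(\mathcal G_\sigma)$; you correctly isolate the only nontrivial point, namely that when $g\neq 0$ the signature coefficient of $\beta$ need not be $1$, and handle it via the finite set $P_2$.
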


\begin{proof}
By Lemma~\ref{lem:sig-gb-is-finite}, the set $\mathcal G_\sigma$ is finite, 
say $\mathcal G_\sigma = \{ g_1^{[\beta_1]}, \dots, g_s^{[\beta_s]}\}$.
Any prime $p$ which does not divide the denominator of any coefficient appearing in any of the $\beta_i$ satisfies the first condition of Definition~\ref{def:sigma-lucky}.
Among all the primes satisfying the first condition, let $p$ be such that
$\LD((\mathcal G_\sigma)_p) = \LD(\mathcal G_\sigma)$.
Note that there are only finitely many primes which do not qualify here.
Then, Lemma~\ref{lem-ld-prime} yields $(\mathcal G_\sigma)_p$ = $\mathcal G_{p,\sigma}$,
which implies the second condition in Definition~\ref{def:sigma-lucky}.
\end{proof}





If $p$ is $\sigma$-lucky, then $\mathcal G_{p,\sigma}$ and $(\mathcal G_\sigma)_p$ are actually equal.

\begin{corollary}
\label{cor:L2-cond}
If $p$ is $\sigma$-lucky, then $(\mathcal G_\sigma)_p = \mathcal G_{p, \sigma}$.
\end{corollary}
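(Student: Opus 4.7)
The plan is to reduce the statement to Lemma~\ref{lem-ld-prime}: that lemma yields $(\mathcal G_\sigma)_p = \mathcal G_{p,\sigma}$ as soon as $\LD((\mathcal G_\sigma)_p) = \LD(\mathcal G_\sigma)$. Condition~\eqref{cond:lucky2} of $\sigma$-luckiness already supplies $\LD(\mathcal G_\sigma) = \LD(\mathcal G_{p,\sigma})$, so it is enough to show that reduction modulo $p$ preserves the leading data of $\mathcal G_\sigma$ elementwise.

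First, I would use condition~\eqref{cond:lucky1} to guarantee that $(g_p)^{[\beta_p]}$ is well-defined for every $g^{[\beta]} \in \mathcal G_\sigma$. I would then check preservation of the polynomial leading monomial using reducedness of $\mathcal G_\sigma$: if $g \neq 0$, then $g$ is monic and hence $\LM(g_p) = \LM(g)$; if $g = 0$, then trivially $g_p = 0$, and the monicness of $\beta$ yields $\sig(\beta_p) = \sig(\beta)$. Since distinct elements of $\mathcal G_\sigma$ carry distinct signatures (as in the proof of Lemma~\ref{lem:sig-gb-is-finite}), once both $\LM$ and $\sig$ are preserved the assignment $g^{[\beta]} \mapsto (g_p)^{[\beta_p]}$ is injective on leading data and $\LD((\mathcal G_\sigma)_p) = \LD(\mathcal G_\sigma)$ follows.

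The main obstacle is the remaining case: for $g \neq 0$, I still need $\sig(\beta_p) = \sig(\beta)$, i.e.~that the coefficient of $\sig(\beta)$ in $\beta$ does not vanish modulo $p$. Invoking the standard normalization underlying signature-based algorithms (Remark~\ref{rem:sig-gb}), the labels $\beta$ can be chosen monic at their signatures, in which case the preservation is immediate. A more definitional argument proceeds by contradiction: assuming $\sig(\beta_p) \prec \sig(\beta)$, condition~\eqref{cond:lucky2} furnishes a unique $h^{[\gamma]} \in \mathcal G_{p,\sigma}$ with $\LD(h^{[\gamma]}) = (\LM(g), \sig(\beta))$, and $(g_p)^{[\beta_p]} \in I_p\Sig$ then shares the leading monomial of $h$ but has a strictly smaller signature; regular sig-reducing $(g_p)^{[\beta_p]}$ by $\mathcal G_{p,\sigma}$ and combining with Lemma~\ref{lemma:s-reductions} forces an element of $\mathcal G_{p,\sigma}$ that is top regular sig-reducible by another, contradicting the reducedness of $\mathcal G_{p,\sigma}$. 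With $\LD((\mathcal G_\sigma)_p) = \LD(\mathcal G_\sigma)$ established, Lemma~\ref{lem-ld-prime} concludes the argument.
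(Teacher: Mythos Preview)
Your overall plan coincides with the paper's: establish $\LD((\mathcal G_\sigma)_p)=\LD(\mathcal G_\sigma)$ elementwise and then invoke Lemma~\ref{lem-ld-prime}. The case $g=0$ and the preservation of $\LM$ for $g\neq 0$ are handled exactly as in the paper.

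Two comments on the remaining step $\sig(\beta_p)=\sig(\beta)$ for $g\neq 0$. Your Option~A is not licensed by Definition~\ref{def:reduced}: that definition forces $\beta$ to be monic only when $g=0$, and since the reduced strong Gr\"obner basis is \emph{unique} you are not free to renormalize~$\beta$. Remark~\ref{rem:sig-gb} concerns how the computation is organized in practice, not a structural property of the reduced basis, so this shortcut does not stand on its own.

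Your Option~B points in the right direction, but the appeal to Lemma~\ref{lemma:s-reductions} is superfluous and makes the sketch hard to parse (that lemma compares two elements with \emph{equal} signatures, whereas here $\sig(\beta_p)\prec\sig(\gamma)$). The clean version of your idea is: since $g_p\neq 0$, the strong-basis property yields $h'^{[\gamma']}\in\mathcal G_{p,\sigma}$ and $a,b$ with $\lm(ah'b)=\lm(g_p)=\lm(h)$ and $\sig(a\gamma'b)\preceq\sig(\beta_p)\prec\sig(\gamma)$, so $h^{[\gamma]}$ is top regular sig-reducible by $h'^{[\gamma']}\neq h^{[\gamma]}$, contradicting reducedness of $\mathcal G_{p,\sigma}$. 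The paper runs the same contradiction on the other side: it transports the reducer $h'^{[\gamma']}$ back through $\LD(\mathcal G_{p,\sigma})=\LD(\mathcal G_\sigma)$ to an element of $\mathcal G_\sigma$ that would regular sig-reduce $g^{[\beta]}$, contradicting reducedness of $\mathcal G_\sigma$ over~$\Q$. Either route works; neither needs Lemma~\ref{lemma:s-reductions}.
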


\begin{proof}
We show that $p$ $\sigma$-lucky implies that $\LD((\mathcal G_\sigma)_p) = \LD(\mathcal G_\sigma)$.
With this, Lemma~\ref{lem-ld-prime} yields the claimed result.

Let $g^{[\beta]} \in \mathcal G_\sigma$.
We show that $\LD(g^{[\beta]}) = \LD(g_p^{[\beta_p]})$.
If $g = 0$, then this follows immediately from the requirement that $\beta$ has to be monic in a reduced set (first condition in Definition~\ref{def:reduced}).
Otherwise, that is, if $g \neq 0$, then $g$ is monic and we get $\lm(g) = \lm(g_p)$.
To see that also the signatures have to be the same,
note that $g_p^{[\beta_p]}$ is sig-reducible by $\mathcal G_{p, \sigma}$ as the latter is a strong Gröbner basis of $I\Sig_p$ up to signature $\sigma$.
But because $p$ is $\sigma$-lucky, sig-reducibility of $g_p^{[\beta_p]}$ by $\mathcal G_{p, \sigma}$
would imply regular sig-reducibility of $g^{[\beta]}$ by $\mathcal G_{\sigma}$ if $\sig(\beta) \neq \sig(\beta_p)$.
Since this would violate the fact that $\mathcal G_{\sigma}$ is reduced, we must have $\sig(\beta) = \sig(\beta_p)$ as required.
\end{proof}

We are now in the same situation as in the classical cases:
For all but finitely many primes, the reduction of $\mathcal G_\sigma$ modulo $p$ equals the reduced strong Gröbner basis $\mathcal G_{p,\sigma}$ of $I\Sig_p$.
Thus, we can now follow the classical approach and, for a finite set of primes $\mathcal P$, first compute $\mathcal G_{p,\sigma}$ for all $p \in \mathcal P$
and then reconstruct $\mathcal G_\sigma$ from these modular images.

The reconstruction consists of two steps. 
First, Chinese remaindering is used to lift the sets $\mathcal G_{p,\sigma}$ to a set of labeled polynomials $\mathcal G_{N,\sigma} \subseteq I\Sig_N$, where $N = \prod_{p \in \mathcal P}$.
Here, to identify Gröbner basis elements corresponding to each other, we compare their signatures.
Then, rational reconstruction is used to lift the coefficients occurring in $\mathcal G_{N,\sigma}$ to rational coefficients.
Classically, this reconstruction is done via the Farey rational map, which is guaranteed to be bijective if $\sqrt{N/2}$ is larger than the absolute value of all numerators and denominators in $\mathcal G_\sigma$~\cite{WGD82}, see also~\cite{KG83}.
We note that there are also more sophisticated approaches for rational reconstruction, see, e.g.,~\cite{Mon04,bdfp},
but we will stick to the classical Farey map in the following as this is also the method implemented in \textsc{SageMath}.

Assuming that $\mathcal P$ only consists of $\sigma$-lucky primes and that $\mathcal P$ is sufficiently large, 
this process is guaranteed to reconstruct
the reduced strong Gröbner basis $\mathcal G_\sigma$ of $I\Sig$ up to signature $\sigma$.
The following definition formalizes when $\mathcal P$ is sufficiently large.


\begin{definition}
Let $\sigma \in \MFr$.
A finite set $\mathcal P$ of $\sigma$-lucky primes is \emph{sufficiently large for $\sigma$}
if
$$
    \prod_{p \in \mathcal P} p \geq \max \{ 2 \cdot c^2 \mid c \in \Z \text{ numerator or denominator appearing in } \mathcal G_\sigma\}.
$$
\end{definition}

\begin{proposition}
\label{prop:lifting-correct}
Let $\sigma \in \MFr$.
If $\mathcal{P}$ is a set of $\sigma$-lucky primes which is sufficiently large for $\sigma$,
then the reduced strong Gr\"obner bases $\mathcal G_{p,\sigma}$ up to signature $\sigma$ ($p\in\mathcal{P}$) lift via Chinese remaindering
and rational reconstruction to the reduced strong Gr\"obner basis $\mathcal G_\sigma$ up to signature $\sigma$. 
\end{proposition}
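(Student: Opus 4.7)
The plan is to combine Corollary~\ref{cor:L2-cond} with the correctness of Chinese remaindering and the injectivity of the Farey rational reconstruction map.

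First I would use $\sigma$-luckyness to set up an element-by-element correspondence. Since every $p \in \mathcal P$ is $\sigma$-lucky, Corollary~\ref{cor:L2-cond} gives $(\mathcal G_\sigma)_p = \mathcal G_{p,\sigma}$, so every element of $\mathcal G_{p,\sigma}$ is the reduction modulo $p$ of a unique element of $\mathcal G_\sigma$. Moreover, Lemma~\ref{lemma:s-reductions} together with Definition~\ref{def:reduced} shows that a reduced strong Gröbner basis contains at most one labeled polynomial per signature, so the matching of elements of $\mathcal G_{p,\sigma}$ across different $p \in \mathcal P$ can be done unambiguously by their signatures, which all lie in the common set $\LD(\mathcal G_\sigma)$.

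Second, for each $g^{[\beta]} \in \mathcal G_\sigma$ let $(g_p)^{[\beta_p]} \in \mathcal G_{p,\sigma}$ denote its reduction modulo $p$, well-defined by condition~\eqref{cond:lucky1} of Definition~\ref{def:sigma-lucky}. Because all these labeled polynomials share the same supports (both on the polynomial side, since they share leading monomials and arise from a common $g$, and on the module side, since their signatures all equal $\sig(\beta)$), I can apply Chinese remaindering coefficient-by-coefficient, both in the polynomial and in the module component. Setting $N = \prod_{p \in \mathcal P} p$, this produces a labeled polynomial in $\Z_N\<X> \times \Sigma_N$ whose coefficients are precisely the reductions modulo $N$ of the rational coefficients of $g^{[\beta]}$.

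Third, I apply the Farey rational reconstruction map to each coefficient. The hypothesis that $\mathcal P$ is sufficiently large for $\sigma$ means $N \geq 2c^2$ for every numerator and denominator $c$ appearing in $\mathcal G_\sigma$, i.e.\ every such $c$ satisfies $|c| \leq \sqrt{N/2}$. By the classical injectivity result of~\cite{WGD82}, the Farey map is bijective on rationals with numerator and denominator bounded by $\sqrt{N/2}$, so it correctly recovers each original rational coefficient. Repeating this for every element of $\mathcal G_\sigma$ yields $\mathcal G_\sigma$ back.

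The main technical obstacle is really step one, verifying that the signature-based matching across different primes gives a well-defined bijection onto $\mathcal G_\sigma$; everything downstream is then a routine appeal to CRT and the Farey bound. The uniqueness of signatures within a reduced strong Gröbner basis, together with Corollary~\ref{cor:L2-cond}, is precisely what makes this bookkeeping work without having to compare leading monomials across different $\mathcal G_{p,\sigma}$ (which, as discussed in Section~\ref{sec:problems-with-classical-approach}, was one of the key difficulties of the classical approach).
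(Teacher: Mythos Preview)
Your proposal is correct and follows essentially the same approach as the paper: invoke Corollary~\ref{cor:L2-cond} to get $(\mathcal G_\sigma)_p = \mathcal G_{p,\sigma}$ for each $p\in\mathcal P$, then appeal to Chinese remaindering and the Farey bound guaranteed by the ``sufficiently large'' hypothesis. The paper's own proof is a two-sentence version of exactly this; your additional remarks about signature-based matching simply make explicit what the paper leaves implicit (and discusses in the surrounding text).
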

\begin{proof} 
Since all primes in $\mathcal{P}$ are $\sigma$-lucky, Corollary~\ref{cor:L2-cond} implies that $(\mathcal G_\sigma)_p = \mathcal G_{p,\sigma}$ for all $p\in\mathcal{P}$.
Since $\mathcal{P}$ is assumed to be sufficiently large for $\sigma$, the coefficients of the Chinese
remainder lift $\mathcal G_{N,\sigma}$, with $N = \prod_{p \in \mathcal P}$, have a unique preimage under the Farey map, that is, 
they have a unique lift to the rational numbers.
\end{proof}

Corollary~\ref{cor finite} guarantees the existence of a sufficiently large set $\mathcal P$ of $\sigma$-lucky primes.
However, in practice, we face the problem that we cannot decide in advance whether a given set $\mathcal P$ is indeed sufficiently large,
or if it consists only of $\sigma$-lucky primes.
Both these properties would require knowledge of $\mathcal G_\sigma$, which is obviously unknown a priori.

To remedy this situation, we proceed as suggested in~\cite{bdfp,DEVT20}:
We randomly choose a finite set $\mathcal P$ of primes and compute, for fixed $\sigma \in \MFr$, the set $\mathcal{GP} = \{ \mathcal G_{p,\sigma} \mid p \in \mathcal P\}$.
Then, we use a majority vote to filter out $\sigma$-unlucky primes with high probability:

\vspace{0.2cm}
\noindent\emph{\textsc{deleteByMajorityVote:} 
Define an equivalence relation on $(\mathcal{P}, \mathcal{GP})$ by setting $(p, \mathcal G_{p,\sigma}) \sim (q, \mathcal G_{q,\sigma}):\Longleftrightarrow \LD(\mathcal G_{p,\sigma})= \LD(\mathcal G_{q,\sigma})$.
Then, replace $(\mathcal{P}, \mathcal{GP})$ by an equivalence class of largest cardinality.}
\vspace{0.2cm}

After this deletion step, the set $\mathcal P$ only contains $\sigma$-lucky primes with high probability.
We note that a faulty decision here will be detected by subsequent tests.
Moreover, after the majority vote, all reduced strong Gröbner bases in $\mathcal{GP}$ share the same leading data.
Hence, we can apply Chinese remaindering and rational reconstruction to the coefficients appearing in these bases.
If the reconstruction fails (because $\mathcal P$ was not sufficiently large), 
we enlarge the set $\mathcal{P}$ by new primes not used so far, and repeat the whole process.

\begin{remark}
Although highly unlikely in practice, it could happen that 
\emph{\textsc{deleteByMajorityVote}} only accumulates $\sigma$-unlucky primes
if done naively. 
To avoid this, a weighted cardinality count must be employed.
Whenever the set $\mathcal P$ is enlarged, the newly added primes should be given
a collective weight that exceeds the total weight of the previously present elements.
\end{remark}

Once the reconstruction succeeds and we obtain a set $\mathcal G_\sigma$ of labeled polynomials with rational coefficients,
we conduct a final verification test to ensure that the reconstructed result is indeed a strong Gröbner basis of $I\Sig$ up to signature $\sigma$.
The details of this verification process will be discussed further in Section~\ref{sec:verification}; 
for now, we treat it as a black-box test.

All in all, we obtain the following Algorithm~\ref{modGalgo} for computing strong Gröbner bases in a modular fashion.
It closely resembles~\cite[Alg.~6]{bdfp} or~\cite[Alg.~1]{DEVT20}.

\begin{algorithm}
\caption{Modular Gr\"obner Basis Algorithm}\label{modGalgo}
\begin{algorithmic}[1]
\REQUIRE A family of polynomials $(f_1,\dots,f_r) \in \Q\<X>^r$ generating an ideal $I = \< f_1,\dots,f_r>$ and a signature $\sigma \in \MFr$.
\ENSURE The reduced strong Gröbner basis $\mathcal G_\sigma$ of the labeled module $I\Sig$ up to signature $\sigma$.
\STATE choose randomly a set $\mathcal P$ of primes;
\WHILE{\textsc{True}}
\STATE compute $\mathcal{GP} = \{\mathcal G_{p,\sigma} \mid p \in \mathcal P\}$;
\STATE $(\mathcal{P},\mathcal{GP})=\text{\emph{\textsc{deleteByMajorityVote}}}(\mathcal{P},\mathcal{GP})$;
\STATE lift the strong Gr\"obner bases in $\mathcal{GP}$ to $\mathcal G_\sigma\subseteq I\Sig$ via Chinese remaindering and rational reconstruction;
\IF {the lifting succeeds and \emph{\textsc{verificationTest}}$(\mathcal G_\sigma, \sigma)$}
\RETURN $\mathcal{G}_\sigma$;
\ENDIF
\STATE enlarge $\mathcal{P}$ with primes not used so far;
\ENDWHILE
\end{algorithmic}
\end{algorithm}

\begin{remark}
Algorithm~\ref{modGalgo} offers several opportunities for parallel computation. 
Most notably, the computation of $\mathcal G_{p,\sigma}$ can be parallelized for different primes $p$.
Additionally, the final verification test can also be performed in parallel, see Remark~\ref{rem:final-test-parallel}.
\end{remark}

\begin{remark}
Algorithm~\ref{modGalgo} can be adapted to handle one-sided ideals, allowing the computation of one-sided strong Gröbner bases. 
This can be done in one of two ways: either by directly substituting all two-sided Gröbner basis computations with their one-sided counterparts, or by applying the techniques from~\cite{Hey01}, which convert one-sided Gröbner basis computations into two-sided ones, allowing the existing algorithm to be used without modification.
\end{remark}

\subsection{Verification}
\label{sec:verification}

Proposition~\ref{prop:lifting-correct} ensures that the reduced strong Gröbner basis $\mathcal G_\sigma$ up to signature $\sigma$ can be recovered from modular computations provided that the set $\mathcal P$ of used primes is sufficiently large for $\sigma$.
Since we cannot verify in advance whether $\mathcal P$ is sufficiently large,
we have to conduct a final verification test to ensure the correctness of Algorithm~\ref{modGalgo}.

As mentioned earlier, a verification of classical Gröbner bases in free algebras relies on an analogue of Buchberger's S-polynomial criterion.
A difference to the commutative case is, however, that two noncommutative polynomials can give rise to more than one S-polynomial.
To uniquely characterize each S-polynomial, we recall the concept of \emph{ambiguities} from~\cite{Ber78}.
We adapt this notion directly to labeled polynomials and note that the definition for classical
polynomials can be recovered by simply ignoring the bimodule labelings.

Let $f^{[\alpha]}, g^{[\beta]} \in I\Sig$ with $f,g \neq 0$.
If $a,b,c,d \in \<X>$ are such that $\lm(afb) = \lm(cgd)$ and 
$a,b,c,d$ are minimal in a certain sense, then the tuple
$$
    \amba = (a \otimes b, c \otimes d, f^{[\alpha]}, g^{[\beta]})
$$
is called an \emph{ambiguity} of $f^{[\alpha]}$ and $g^{[\beta]}$.
For simplicity, omit the precise characterization of minimality here, as it is not relevant for our purpose.
A full definition can be found in, e.g.,~\cite[Def.~2.4.46]{Hof23}.

Each ambiguity gives rise to an S-polynomial.
To verify that a set of polynomials forms a Gröbner basis, one must check whether all S-polynomials reduce to zero.
This is computationally expensive as it involves extensive polynomial arithmetic.
In contrast, verifying that a set is a \emph{strong} Gröbner basis is a purely combinatorial task and does not require any polynomial arithmetic.
This is the result of the \emph{cover criterion} stated below, which is based on the concept of \emph{covered} ambiguities.

In the following, for an ambiguity $\amba = (a \otimes b, c \otimes d, f^{[\alpha]}, g^{[\beta]})$, we define
$$
\lma(\amba) = \lm(afb)
\quad
\text{ and }
\quad
\siga(\amba) = \max\{ \sig(a\alpha b), \sig(c \beta d)\}.
$$
Moreover, $\amba$ is \emph{regular} if $\sig(a\alpha b) \neq \sig(c \beta d)$.

\begin{definition}\label{def:cover-free-algebra}
Let $\mathcal G \subseteq I\Sig$ and $f^{[\alpha]}, g^{[\beta]} \in \mathcal G$ with $f,g \neq 0$.
An ambiguity $\amba$ of $f^{[\alpha]}$ and $g^{[\beta]}$ is \emph{covered} (by $\mathcal G$) 
if there exist $h^{[\gamma]} \in \mathcal G$ and $a,b\in \<X>$ such that the following conditions hold:
\begin{enumerate}
	\item $\lma(\amba) \succ \lm(a h b)$;
	\item $\siga(\amba) = \sig(a \gamma b)$.
\end{enumerate}
\end{definition}

In the definition, $h$ can also be zero, in which case the first condition is always fulfilled since $\lm(0) = 0 \prec w$ for all $w \in \<X>$.

The cover criterion characterizes strong Gröbner bases via covered ambiguities.
It was originally formulated for commutative polynomials~\cite[Thm.~2.4]{GVW16} and has 
subsequently been generalized to the free algebra~\cite[Thm.~3.3.11]{Hof23}.

\begin{theorem}[Cover criterion]
\label{thm:cover-theorem-free-algebra}
Let $\sigma \in \MFr$.
A subset $\mathcal G \subseteq I\Sig$ is a strong Gröbner basis of $I\Sig$ (up to signature $\sigma$) if the following conditions hold:
\begin{enumerate}
	\item for all $i = 1,\dots,r$ (with $\varepsilon_i \prec \sigma$), there exists $g_i^{[\beta_i]} \in \mathcal G$ with $\sig(\beta_i) =~\varepsilon_i$;
    \item for all $f^{[\alpha]}, g^{[\beta]} \in \mathcal G$ and $w \in \<X>$ (with $\sig(\alpha w g - f w \beta) \prec \sigma)$, the syzygy $0^{[\alpha w g - f w \beta]}$ is top syz-reducible by $\mathcal G$;
    \item all regular ambiguities $\amba$ of $\mathcal G$ (with $\siga(\amba) \prec \sigma$) are covered.
\end{enumerate}
\end{theorem}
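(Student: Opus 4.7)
My plan is a minimal-counterexample argument using the fairness of the bimodule ordering. Suppose the three conditions hold but $\mathcal G$ fails to be a strong Gr\"obner basis up to signature $\sigma$. Then the set of signatures $\sig(\alpha)$ for which some $f^{[\alpha]} \in I\Sig$ with $\sig(\alpha)\prec\sigma$ witnesses a failure of Definition~\ref{def:strong-gb} is a nonempty subset of $\MFr$ and hence has a least element by fairness. Fix a witness $f^{[\alpha]}$ of minimal signature $\tau = \sig(\alpha)$; the goal is to derive a contradiction by exhibiting a labeled element in $\mathcal G$ that reduces $f^{[\alpha]}$.

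The first step is to bring $\alpha$ into a normal form adapted to $\mathcal G$. Since $\tau \prec \sigma$, in particular every $\varepsilon_i$ with $\varepsilon_i \preceq \tau$ satisfies $\varepsilon_i \prec \sigma$, so Condition~(1) provides elements $g_i^{[\beta_i]} \in \mathcal G$ with $\sig(\beta_i) = \varepsilon_i$. By iteratively substituting these labels and invoking the inductive hypothesis (strong Gr\"obner basis behaviour at strictly smaller signatures), I would write
\[
\alpha \;=\; \sum_{k=1}^{m} c_k\, a_k \beta_{j_k} b_k \;+\; \alpha', \qquad \sig(\alpha') \prec \tau,
\]
with $g_{j_k}^{[\beta_{j_k}]} \in \mathcal G$ and $\sig(a_k \beta_{j_k} b_k) = \tau$ for every $k$. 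Put $M = \lm(a_1 g_{j_1} b_1)$; if $f \neq 0$ and $\lm(f) = M$, then $f^{[\alpha]}$ is already sig-reducible by $g_{j_1}^{[\beta_{j_1}]}$, contradicting the choice of $f^{[\alpha]}$. Hence either $f = 0$, or $\lm(f) \prec M$ and the leading monomials of the top-signature summands must cancel.

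The heart of the proof is then to analyse this cancellation. Consider two indices $k, \ell$ whose summands $a_k g_{j_k} b_k$ and $a_\ell g_{j_\ell} b_\ell$ share the leading monomial $M$. There are exactly two combinatorial possibilities, which together exhaust how equal leading monomials can arise in the free algebra: either (i) the factorizations of $M$ through $\lm(g_{j_k})$ and $\lm(g_{j_\ell})$ overlap minimally, in which case they are governed by an ambiguity $\amba$ of $g_{j_k}^{[\beta_{j_k}]}$ and $g_{j_\ell}^{[\beta_{j_\ell}]}$ with $\lma(\amba) = M$ and $\siga(\amba) = \tau$; or (ii) the two summands are genuinely disjoint (no internal overlap), and the cancellation comes from the intrinsic bimodule relation $\beta_{j_k}\, w\, g_{j_\ell} - g_{j_k}\, w\, \beta_{j_\ell}$ captured by Condition~(2). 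In case~(i), if the ambiguity is regular, Condition~(3) furnishes a cover $h^{[\gamma]} \in \mathcal G$ with $\sig(a\gamma b) = \tau$ and $\lm(ahb) \prec M$, so subtracting the appropriate scalar multiple of $(ahb)^{[a\gamma b]}$ from the top-signature part strictly decreases the leading monomial at signature $\tau$ without affecting lower-signature strata; if the ambiguity is not regular, the two contributions already have matching signatures and polynomial parts up to scalar, so Lemma~\ref{lemma:s-reductions} (applied at signatures strictly below $\tau$, where $\mathcal G$ is already a strong Gr\"obner basis by induction) collapses them. In case~(ii), Condition~(2) gives a top syz-reducer with signature $\tau$, which likewise kills the top-signature part.

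Applying these moves finitely often (they strictly reduce either the number of summands at signature $\tau$ or the leading monomial $M$, which is well-founded) rewrites $\alpha$ as $\alpha''$ with $\sig(\alpha'') \prec \tau$ and $\overline{\alpha''} = f$. If $f \neq 0$, the minimality of $\tau$ forces $f^{[\alpha'']}$ to be sig-reducible by $\mathcal G$, and this very reduction transfers to $f^{[\alpha]}$. If $f = 0$, then $\alpha''$ is a syzygy and, by minimality, must itself be top syz-reducible, which pulls back to top syz-reducibility of $\alpha$ via the syz-reducers already produced. In every case, $f^{[\alpha]}$ turns out to be (sig- or syz-)reducible by $\mathcal G$, contradicting the choice of counterexample. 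The main obstacle I anticipate is the bookkeeping in Step~3: in the free algebra the classification of overlaps is considerably richer than in the commutative case, and one must argue carefully, in the spirit of Bergman's diamond lemma, that minimal overlaps together with the Condition~(2) relations genuinely exhaust all cancellation patterns at the given signature.
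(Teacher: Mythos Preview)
First, note that the paper does not give its own proof of this theorem: it is quoted from \cite[Thm.~3.3.11]{Hof23} (the commutative prototype being \cite[Thm.~2.4]{GVW16}), so there is no in-paper argument to compare against directly.

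Your overall strategy --- take a minimal-signature counterexample $\tau$ and exploit Conditions~(1)--(3) --- is the standard one and is sound in outline. The gap is in the decomposition you write in Step~2. Condition~(1) only hands you elements with $\sig(\beta_i)=\varepsilon_i$; writing $\tau=a_0\varepsilon_{i_0}b_0$ and subtracting the appropriate multiple $c\,a_0\beta_{i_0}b_0$ from $\alpha$ already drops the signature strictly below $\tau$. Thus the natural rewriting produces $m=1$, a \emph{single} summand at signature $\tau$, and your subsequent move ``pick two indices $k,\ell$ at signature $\tau$ sharing leading monomial $M$'' has nothing to act on. You have imported the classical Buchberger cancellation picture (several polynomial summands at a common leading monomial) into a setting where the ``top'' is the module signature, and the two pictures do not align.

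The repair is to locate the second element correctly. With $m=1$ one has $f=c_1a_1g_{j_1}b_1+\overline{\alpha'}$ and $M=\lm(a_1g_{j_1}b_1)$. If $\lm(f)\prec M$, then necessarily $M=\lm(\overline{\alpha'})$, and since $\sig(\alpha')\prec\tau$, the inductive hypothesis (strong Gr\"obner basis behaviour strictly below $\tau$) yields a top reducer $g'^{[\beta']}\in\mathcal G$ with $\lm(a'g'b')=M$ and $\sig(a'\beta'b')\prec\tau$. Now you genuinely have two $\mathcal G$-elements meeting at $M$, one at signature $\tau$ and one strictly below, and your overlap/non-overlap dichotomy applies verbatim: an overlap yields a \emph{regular} ambiguity (the two signatures differ), and its cover via Condition~(3) produces some $h^{[\delta]}\in\mathcal G$ hitting signature $\tau$ with strictly smaller leading monomial, which replaces $(g_{j_1},a_1,b_1)$; a non-overlap triggers Condition~(2) and yields a syzygy in $\mathcal G$ whose signature divides $\tau$. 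Iterating this single-summand descent on $M$ (well-founded, since $\prec$ is a well-ordering on $\<X>$) terminates either with $\lm(f)=M$, giving sig-reducibility of $f^{[\alpha]}$, or with a syzygy in $\mathcal G$ at signature dividing $\tau$, giving syz-reducibility when $f=0$ and, when $f\neq 0$, allowing you to lower $\sig(\alpha)$ below $\tau$ and conclude by minimality.
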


Based on the cover criterion, we can formulate the following final verification test for Algorithm~\ref{modGalgo}.
An implicit assumption in the cover criterion is that all elements $g^{[\beta]}$ are well-formed labeled polynomials, meaning
they satisfy $\overline \beta = g$.
For the output of our modular algorithm, we cannot assume that this condition holds automatically.
Therefore, we must explicitly verify it.

\vspace{0.2cm}
\emph{\textsc{verificationTest:} 
First, check for every element $g^{[\beta]}$ in the reconstructed candidate basis that indeed $\overline \beta = g$.
Then, check the cover criterion up to signature $\sigma$.
Return \texttt{true} if both tests pass, and \texttt{false} otherwise.}
\vspace{0.2cm}

It is important to note that this verification test applies to all inputs, including inhomogeneous ones. 
This is in contrast to the classical approach used in other settings, where a rigorous verification is only possible for homogeneous inputs~\cite{DEVT20}. 

\begin{remark}\label{rem:poly-arith}
Verifying the cover criterion is a purely combinatorial problem and can be done very efficiently.  
The more costly part of the verification test is the polynomial arithmetic, i.e., that $\overline \beta = g$ holds for all $g^{[\beta]}$.
However, note that the polynomial arithmetic required for our verification test only grows linearly with the size of the candidate basis.
In contrast, verifying Buchberger's S-polynomial criterion would require to reduce quadratically many S-polynomials to zero, and thus,
is, in general, more expensive.
\end{remark}

As mentioned in Remark~\ref{rem:sig-gb}, signature-based algorithms typically first compute a signature Gröbner basis,
which consists of pairs $(g,\sig(\beta))$ rather than labeled polynomials $g^{[\beta|}$.
Afterwards, the full module representations $\beta$ are reconstructed. 
This approach is faster than directly computing a strong Gröbner basis, although the reconstruction step may still be time-consuming.

We can observe that the information provided by a signature Gröbner basis is already sufficient for verifying the cover criterion,
as the latter only relies on information contained in the leading data.
Therefore, if the goal is solely to compute a Gröbner basis without requiring the module representations, there is no need to reconstruct them in full.

Using this observation, we can adapt Algorithm~\ref{modGalgo} as to only compute the \emph{reduced signature Gröbner basis} of $I\Sig$ (up to signature $\sigma$).

\begin{definition}
Let $\mathcal G \subseteq I\Sig$ be the reduced strong Gröbner basis (up to signature $\sigma \in \MFr$).
The \emph{reduced signature Gröbner basis} of $I\Sig$ (up to signature $\sigma$) is the set
$$
\{ (g, \sig(\beta) \mid g^{[\beta]} \in \mathcal G\} \subseteq I \times \MFr.
$$
\end{definition}

The reduced signature Gröbner basis inherits most of the properties from the reduced strong Gröbner basis.
In particular, it is unique and the nonzero polynomials form a Gröbner basis of the underlying ideal.

We can replace all reduced strong Gröbner basis computations in Algorithm~\ref{modGalgo} by reduced signature Gröbner basis computations.
Then, for the final verification, we have to verify the cover criterion and additionally 
we must also ensure that, for every pair $(g, \mu) \in \Q\<X> \times \MFr$ that we reconstruct from modular images, 
$g$ and $\mu$ are correctly related.
This means we need to verify the existence of $\beta \in \Sigma$ such that $\overline{\beta} = g$ and $\sig(\beta) = \mu$.

The following result shows how this verification can be done efficiently.
It holds for arbitrary coefficient fields $K$ and is therefore phrased in this general setting.
We also note that the concept of regular sig-reductions can be extended from labeled polynomials to pairs in $K\<X> \times \MFr$.
In particular, $(f,\sigma)$ is regular sig-reducible by $(g,\mu)$ if there are $a,b \in\<X>$ such that $\lm(agb) \in \supp(f)$ and $a\mu b \prec \sigma$.
The corresponding regular remainder is the pair $(f - agb, \sigma)$.

\begin{proposition}
Let $\mathcal H \subseteq K\<X> \times \MFr$ with every nonzero polynomial in $\mathcal H$ monic
and such that 
$$(\lm(g), \mu) = (\lm(ahb), a\nu b) \implies (g,\mu) = (h, \nu),
$$
for all $(g,\mu), (h, \nu) \in \mathcal H$ and $a,b \in \<X>$.

Then $\mathcal H$ is the reduced signature Gröbner basis of $I\Sig$ (up to signature $\sigma \in \MFr$) if and only if the following conditions hold:
\begin{enumerate}
    \item $\mathcal H$ satisfies the cover criterion (up to signature $\sigma$);
    \item for every $(g,\mu) \in \mathcal H$, the element $(\overline{\mu}, \mu)$ regular sig-reduces by $\mathcal H$ to $(c\cdot g, \mu)$ for some nonzero $c \in K$.
\end{enumerate}
\end{proposition}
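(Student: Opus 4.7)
The plan is to prove the equivalence by double implication. The forward direction will be mostly bookkeeping; the backward direction requires constructing module representations by well-founded induction on signatures.

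For $(\Rightarrow)$, I would assume $\mathcal H$ arises from the reduced strong Gröbner basis $\mathcal G = \{g^{[\beta]}\}$ via the map $g^{[\beta]} \mapsto (g, \sig(\beta))$. Condition (1) will follow because the cover criterion is a characterization of strong Gröbner bases and depends only on leading data, which $\mathcal G$ and $\mathcal H$ share. For condition (2), I would pick $(g,\mu) \in \mathcal H$ with corresponding $g^{[\beta]} \in \mathcal G$, $\sig(\beta) = \mu$, and consider the initial labeled polynomial $\overline{\mu}^{[\mu]}$. Regular sig-reducing this element by $\mathcal G$ terminates (by fairness of $\prec_\Sigma$) at a regular sig-reduced element whose signature remains $\mu$. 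Applying Lemma~\ref{lemma:s-reductions} to this element and to $g^{[\beta]}$ (both regular sig-reduced with signature $\mu$) yields that the polynomial part of the result is $c \cdot g$ for some nonzero $c$. Since each reduction step uses an element of $\mathcal G$ whose leading data appears in $\mathcal H$, the chain translates to a regular sig-reduction chain in $\mathcal H$ from $(\overline{\mu}, \mu)$ to $(cg, \mu)$.

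For $(\Leftarrow)$, assuming (1) and (2), I would construct, for each $(g, \mu) \in \mathcal H$, a module element $\beta$ satisfying $\overline{\beta} = g$ and $\sig(\beta) = \mu$. Using that $\prec_\Sigma$ is a well-ordering, I would proceed by induction on $\mu$. For each pair, condition (2) provides a chain $(\overline{\mu}, \mu) \to \dots \to (cg, \mu)$ of regular sig-reductions, each employing some $(h_i, \nu_i) \in \mathcal H$ with $a_i \nu_i b_i \prec \mu$ and hence $\nu_i \prec \mu$. By induction, each $h_i$ has a representation $\eta_i$ with $\overline{\eta_i} = h_i$ and $\sig(\eta_i) = \nu_i$; combining these along the chain yields $\beta = c^{-1}\bigl(\mu - \sum_i a_i \eta_i b_i\bigr)$ (or $\mu - \sum_i a_i \eta_i b_i$ in the case $g = 0$, to achieve monic-ness) with $\overline{\beta} = g$ and $\sig(\beta) = \mu$ (since each $a_i \nu_i b_i \prec \mu$). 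The resulting set $\mathcal G$ of labeled polynomials then has the same leading data as $\mathcal H$, so the cover criterion transfers and Theorem~\ref{thm:cover-theorem-free-algebra} implies $\mathcal G$ is a strong Gröbner basis up to signature $\sigma$. Reducedness of $\mathcal G$ follows from the monic hypothesis on $\mathcal H$, the termination of the reduction chain in (2) (giving regular sig-reducedness), and the leading-data uniqueness hypothesis on $\mathcal H$ (giving top sig-reducedness). Uniqueness of the reduced strong Gröbner basis then identifies $\mathcal H$ with the reduced signature Gröbner basis.

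The main obstacle will be the inductive construction in $(\Leftarrow)$: I must verify that the combined $\beta$ has signature exactly $\mu$ rather than something larger, which depends on the strict bounds $a_i \nu_i b_i \prec \mu$ at every reduction step, and that the construction remains consistent regardless of which reduction path from condition (2) is chosen. A secondary subtlety is achieving syz-reducedness in $\mathcal G$: the constructed $\beta$ may a priori contain signatures of syzygies from the inductively assembled $\mathcal G$, so a further well-founded syz-reduction step on the module part (which does not alter $g$) may be needed, again terminating by fairness of $\prec_\Sigma$.
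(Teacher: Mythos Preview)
Your proposal is correct and follows essentially the same route as the paper's proof: both directions use the cover criterion for condition~(1), Lemma~\ref{lemma:s-reductions} for condition~(2), and a well-founded (equivalently, minimal-counterexample) construction of module representations $\beta$ for the backward direction, followed by an additional syz-reduction to enforce reducedness. Two minor remarks: the signature of your constructed $\beta$ can never be \emph{larger} than $\mu$ (the $\mu$-term dominates by construction), so the only thing to check is that it is not smaller; and the consistency across reduction paths is irrelevant, since existence of a single $\beta$ suffices.
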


\begin{proof}
For the first implication, assume that $\mathcal H$ is the reduced signature Gröbner basis of $I\Sig$ (up to signature $\sigma$)
and let $\mathcal G$ be the reduced strong Gröbner basis (up to signature $\sigma$).
Then $\mathcal H$ clearly satisfies the cover criterion (up to signature $\sigma$) because $\mathcal G$ does.
For the second condition, let $(g,\mu) \in \mathcal H$.
By assumption, there is $\beta \in \Sigma$ such that $g^{[\beta]} \in \mathcal G$ and $\sig(\beta) = \mu$.
Then, let $(h,\mu)$ be the result of regular sig-reducing $(\overline{\mu},\mu)$ by $\mathcal H$ and let $\gamma \in \Sigma$ be such that $\overline{\gamma} = h$ and $\sig(\gamma) = \mu$.
The labeled polynomial $h^{[\gamma]}$ is regular sig-reduced by $\mathcal G$.
Furthermore, so is $g^{[\beta]}$, because $\mathcal G$ is reduced.
Since $\sig(\beta) = \mu = \sig(\gamma)$, Lemma~\ref{lemma:s-reductions} yields that $h = c\cdot g$ for some nonzero $c \in K$.

For the other implication, assume that $\mathcal H$ satisfies the two conditions of the proposition.
First, we show that for every $(g,\mu) \in \mathcal H$, there exists $\beta \in \Sigma$ 
such that $\overline{\beta} = g$ and $\sig(\beta) = \mu$.
Assume, for contradiction, that this is not the case and let $(g,\mu) \in \mathcal H$ be with $\mu$ minimal such that
there does not exist such a $\beta$.
Denote $\mathcal H_{\mu}= \{ h^{[\beta|} \in I\Sig \mid (h,\sig(\beta)) \in \mathcal H, \sig(\beta) \prec \mu\}$.
Then, regular sig-reducing $(\overline{\mu}, \mu)$ by $\mathcal H$ yields the same polynomial part as 
regular sig-reducing the labeled polynomial $\overline{\mu}^{[\mu]}$ by $\mathcal H_{\mu}$.
Thus, by the second condition of the proposition, $\overline{\mu}^{[\mu]}$ regular sig-reduces by $\mathcal H_\mu$ to $(c \cdot g)^{[\gamma]}$ 
for some $\gamma \in \Sigma$ with $\sig(\gamma) = \mu$.
Then $\beta = 1 / c \cdot \gamma$ satisfies $\overline{\beta} = g$ and $\sig(\beta) = \mu$.

So, we can pick, for each $(g,\mu)\in \mathcal H$, a $\beta \in \Sigma$ such that $\overline{\beta} = g$ and $\sig(\beta) = \mu$.
In particular, we can pick $\beta$ to be monic in case $g = 0$.
Moreover, by picking the $\beta$'s by increasing signature, we can assume that the labeled polynomial $g^{[\beta]}$ is not syz-reducible by all previously constructed elements.
Collecting these labeled polynomials for all $(g,\mu) \in \mathcal H$
in a set $\mathcal G$, we obtain the reduced strong Gröbner basis of $I\Sig$ (up to signature $\sigma$).  

To see this, note that the first condition of the proposition implies that $\mathcal G$ is a strong Gröbner basis (up to signature $\sigma$).
Moreover, by assumption, all nonzero polynomials are monic and, by construction, so are the module representations of all syzygies.
The second condition of the proposition implies that all elements in $\mathcal G$ are regular sig-reduced
and the assumption on the leading data of $\mathcal H$ implies that they are also top sig-reduced.
Finally, the $\beta$'s were chosen so that each element $g^{[\beta]} \in \mathcal G$ 
is syz-reduced, showing that $\mathcal G$ is reduced.
\end{proof}

This result leads to the following final verification test to check whether the output of Algorithm~\ref{modGalgo}
is the reduced signature Gröbner basis up to signature $\sigma$.

\vspace{0.2cm}
\emph{\textsc{sigVerificationTest:}
First, check that the reconstructed candidate basis $\mathcal H$ satisfies 
$(\lm(g), \mu) = (\lm(ahb), a\nu b) \implies (g,\mu) = (h, \nu)$ for all $(g,\mu), (h, \nu) \in \mathcal H$, $a,b \in \<X>$.
Then, check for every element $(g, \mu) \in \mathcal H$ that $(\overline{\mu},\mu)$ regular sig-reduces by $\mathcal H$ to $(c\cdot g,\mu)$ for some nonzero $c \in \Q$.
Finally, check the cover criterion up to signature $\sigma$.
Return \texttt{true} if all three tests pass, and \texttt{false} otherwise.}
\vspace{0.2cm}

\begin{remark}
Analogous to the previous verification test (see Remark~\ref{rem:poly-arith}),
also in this test, the required polynomial arithmetic grows only linearly with the size of the candidate basis.
\end{remark}

\begin{remark}
\label{rem:final-test-parallel}
Both \emph{\textsc{verificationTest}} and \emph{\textsc{sigVerificationTest}} can be parallelized by performing the
polynomial computations as well as the verification of the cover criterion in parallel.
Moreover, in both verification tests, the polynomial computations can be done in positive characteristic, modulo a prime
that has not been used during the computation of the candidate basis, rather than over $\Q$.
This speeds up the verification process, however it comes at the cost of obtaining the correct result only with high probability.  
\end{remark}

\section{Experiments}\label{sec:experiments}

We have implemented our modular algorithm for computing Gr\"obner bases
in free algebras over $\mathbb{Q}$ in the \textsc{SageMath} package \texttt{signature\_gb}~\footnote{available at \url{https://github.com/ClemensHofstadler/signature_gb}},
which provides, to the best of our knowledge, the first and, thus far, only implementation of noncommutative signature-based algorithms.
As described in~\cite[Sec.~6.3]{Hof23}, this package offers
a signature-based variant of the F4 algorithm~\cite{Fau99} for computing
signature and strong Gröbner bases in free algebras.
In particular, we have implemented the version of Algorithm~\ref{modGalgo} described at the end of Section~\ref{sec:verification} for computing 
the reduced signature Gröbner basis up to signature $\sigma$, using \emph{\textsc{sigVerificationTest}} as the final verification test.
Our implementation also allows to perform the modular Gröbner basis computations in parallel, but not the final verification test.

In Table~\ref{table:comparison}, we compare the performance of this algorithm (denoted by ``modular \texttt{sigGB}'')
using the functionality provided by \texttt{signature\_gb} as a back-end for the modular Gröbner
basis computations, to using \texttt{signature\_gb} directly over the rationals (denoted by ``\texttt{sigGB} over $\Q$'').

Our benchmark examples include the following homogeneous systems, which we have gathered from the literature.
We refer to the respective sources for more in-depth discussions on the construction of these examples.

\begin{enumerate}
    \item The example \texttt{metab3} is taken from~\cite[Ex.~5.2]{LL09}
    and consists of the generators of the ideal defining the universal enveloping algebra of the free metabelian Lie algebra in dimension $3$.
    Explicitly, we consider as generators all commutators $[[x_i, x_j], [x_k, x_l]]$ for $1 \leq i,j,k,l \leq 3$
    in $\Q\<x_1,x_2,x_3>$.
    \item The example \texttt{P6} is taken from~\cite{Kel97} 
    and consists of the two homogeneous generators 
    $f_1 = ccc + 2ccb + 3cca + 5bcc + 7aca$,
    $f_2 = bcc + 11bab + 13aaa$
    in $\Q\<a,b,c>$.
    \item The example \texttt{lp1} is taken from~\cite{LL09}
    and consists of the three homogeneous generators
    $
    f_1 = z^4 + yxyx - xy^2x - 3zyxz, 
    f_2 = x^3 + yxy - xyx, 
    f_3 = zyx - xyz + zxz$
    in $\Q\<x,y,z>$.
\end{enumerate}

The number of noncommutative, inhomogeneous benchmark examples in the literature is rather limited.
This is owed mostly to the difficulty of finding inhomogeneous ideals for which a Gröbner basis computation
is challenging but still finite (which is required for comparability).
Since, in our approach, we can use the signature data as a termination measure, 
we do not have to worry about possibly infinite computations.
To obtain challenging inhomogeneous examples, we reuse well-established commutative benchmarks that we simply interpret as noncommutative polynomials.
All inhomogenoeus systems are part of the \textsc{SymbolicData} project~\cite{SymData} and taken from there.

For all examples, a degree-lexicographic monomial ordering is used in combination with the fair bimodule ordering $\prec_\textup{DoPoT}$ for the signature-based computations.
Since the used bimodule ordering is compatible with the signature-degree, we can compute signature Gröbner bases up to certain degree bounds.
The designated degree bounds are indicated by the number after the last ``\texttt{-}'' in the name of each example in Table~\ref{table:comparison}.
So, for example \texttt{lp1-16} means that we compute a signature Gr\"obner basis up to signature $\sigma$, where $\sigma$ is the smallest module
monomial with $\deg(\sigma) = 16$.

All experiments were run with a timeout of 48 hours on a cluster of dual-socket AMD EPYC 7313 @ 3.7GHz machines running Ubuntu 22.04 with a 250 GB memory limit per computation. 
In Table~\ref{table:comparison}, timings are given in the format hh:mm:ss.
Moreover, we abbreviate threads as ``thr'' and indicate a timeout by ``$> 48$h''.

In the first columns of Table~\ref{table:comparison}, we provide some information 
on the size of the different benchmark examples. 
In particular, we list the number of variables (\#vars), the number of generators (\#gens), and their maximal degree (deg).

\begin{table}
  \footnotesize
  \centering
\begin{tabular}{
    l@{\hskip 4pt}
    c@{\hskip 4pt}
    c@{\hskip 4pt}
    c@{\hskip 4pt}
    |
    c@{\hskip 10pt}
    c@{\hskip 5pt}
    c@{\hskip 5pt}
    c@{\hskip 5pt}S} 
 \toprule
\multirow{2}{*}{Example} & \multirow{2}{*}{\#vars} &\multirow{2}{*}{\#gens} & \multirow{2}{*}{deg} & \multirow{2}{*}{\texttt{sigGB} over $\Q$} & & {modular \texttt{sigGB}} & \\
\cline{6-8}
& & & & & {1 thr} & {4 thr} & {8 thr}\\
 \midrule
 \texttt{metab3-16} & 3 & 6 & 4 & 00:12:29 & 00:10:29 & 00:02:51 & 00:01:21 \\
\texttt{P6-12} & 3 & 2 & 3 & 00:03:16 & 00:03:01 & 00:01:31 & 00:01:23 \\
\texttt{P6-13} & & & & 00:46:22 & 00:25:42 & 00:13:37 & 00:09:31 \\
\texttt{lp1-16} & 3 & 3 & 4 & 04:11:30 & 03:05:34 & 00:52:01 & 00:30:55 \\
\texttt{lp1-17} &  &  &  & $> 48$h & 16:44:47 & 05:28:07 & 03:14:43 \\
 \hline
  \texttt{amrhein-10} & 6 & 6 & 2 & 00:15:39 & 00:07:43 & 00:03:22 & 00:02:38 \\
    \texttt{amrhein-11} & & & & 08:13:37 & 02:15:03 & 00:36:39 & 00:28:07 \\
\texttt{cyclic5-11} & 5 & 5 & 5 & 01:10:13 & 00:34:16 & 00:12:03 & 00:07:33 \\
\texttt{cyclic5-12} & & & & 15:12:07 & 03:36:46 & 01:20:40 & 00:58:45 \\
  \texttt{cyclic6-10} & 6 & 6 & 6 & 06:42:06 & 01:21:56 & 00:25:14 & 00:19:56 \\
    \texttt{cyclic6-11} & & & & $> 48$h & 14:50:14 & 05:44:56 & 03:22:30 \\
  \texttt{eco6-14} & 6 & 6 & 3 & 01:51:25 & 02:01:19 & 00:48:50 & 00:26:55 \\
   \texttt{becker-niermann-30} & 3 & 3 & 5 & 12:32:01 & 01:45:27 & 00:44:48 & 00:32:38 \\
      \texttt{becker-niermann-34} & & & & $> 48$h & 13:52:41 & 06:39:05 & 05:40:20 \\
    \texttt{vermeer-17} & 5 & 4 & 5 & 05:05:42 & 03:32:38 & 01:29:02 & 01:05:24 \\
     \bottomrule
\end{tabular}
\caption{Benchmark timings given in the format hh:mm:ss}
\label{table:comparison}
\end{table}

As Table~\ref{table:comparison} shows, the parallelized version of our modular algorithm outperforms the direct application of \texttt{signature\_gb} over the rationals on all benchmark examples.
Even on smaller examples like \texttt{P6-12}, where the computation time is just a few minutes, the modular approach still performs better.
As the complexity of the computations increases, the advantage of the modular algorithm becomes increasingly significant, achieving speedups of over $20\times$ for examples such as \texttt{cyclic6-10} and \texttt{becker-niermann-30} with 8 threads.

With the single-threaded version, the speedups are less dramatic but still significant, reaching up to $7\times$ for \texttt{becker-niermann-30}. 
Also this version outperforms the classical approach over $\Q$ on all benchmark examples, except \texttt{eco6-14}, where it performs slightly worse.

For most benchmark examples, the final verification step, \emph{\textsc{sigVerificationTest}}, accounts for less than 15\% of the total computation time in the single-threaded modular algorithm. 
The only outlier is the \texttt{becker-niermann} family, where the verification takes up 32\% ($\sim 4.5$ hours) of the total computation time for degree 34. 
Using the probabilistic version of the verification test described in Remark~\ref{rem:final-test-parallel}, significantly reduces this cost to less than one percent ($\sim 2.4$ minutes). 
For all benchmark examples, checking the cover criterion takes less than one second and the majority of the time is spent on verifying the correct correspondence between the polynomials and their signatures.

We have to note that our modular algorithm currently appears to be quite memory-inefficient. While running the benchmarks with 16 threads occasionally led to nice speedups compared to using 8 threads -- for instance, \texttt{lp1-16} completed in just 15 minutes with 16 threads -- in many cases, the computation reached the memory limit and had to be aborted. 
Improving the memory efficiency of our implementation is a priority for future work.

\begin{remark}
\label{rem:number-primes}
The efficiency of our modular algorithm depends, in particular, on the number
of Gr\"obner basis computations in positive characteristic before the reconstruction and verification steps. 
In our implementation, we start with the number of available threads and increment, if necessary, by this number.
\end{remark}

%

\section*{Acknowledgements}

V.~L. thanks Wolfram Decker and Hans Sch\"onemann 
 (Kaiserslautern) for discussions.
C.~H. was supported by the LIT AI Lab funded by the State of Upper Austria. A part of this work has been done while both authors were working at the University of Kassel, Germany.

\bibliographystyle{abbrv}

\appendix
\section{Proofs of Example~\ref{ex:infinitely-many-unlucky-primes}}
\label{appendix}

In the following, we provide the remaining proofs of Example~\ref{ex:infinitely-many-unlucky-primes}.
But before we do this, we first collect some properties of the Fibonacci sequence that we will need.

\begin{lemma}
\label{lemma:fib-properties}
Let $(F_n)_{n \geq 0}$ denote the Fibnoacci sequence, i.e., $F_0 = 0, F_1 = 1$ and $F_{n+1} = F_{n} + F_{n-1}$ for all $n > 1$.
For $n \geq 1$, define $a_n = \frac{F_{n-1}}{F_n}$ and $b_n = \frac{F_{n+1}}{F_n}$.
Then the following identities hold for all $m,n \geq 1$:
\begin{multicols}{2}
\begin{enumerate}
    \item $a_{n+1}b_n = 1$;\label{item:fib-1}
    \item $a_n + 1 = b_n$;\label{item:fib-2}
    \item $b_n + 1 = b_n b_{n+1}$;\label{item:fib-3}
     \item $a_m + b_n  = a_n + b_m$; \label{item:fib-4}
    \item $a_m a_n + 1 = (a_n + b_m) a_{m+n}$;\label{item:fib-5}
     \item $b_m b_n + 1 = (a_n + b_m) b_{m+n}$.\label{item:fib-6}
\end{enumerate}
\end{multicols}
\end{lemma}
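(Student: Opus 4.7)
My plan is to prove the six identities in order, leveraging each result to streamline the subsequent ones, and to reduce everything to two standard Fibonacci product identities.

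Identities (1) and (2) are immediate from the definitions: (1) reduces to $\tfrac{F_n}{F_{n+1}} \cdot \tfrac{F_{n+1}}{F_n} = 1$, and (2) is just the Fibonacci recurrence $F_{n-1} + F_n = F_{n+1}$ divided by $F_n$. Identity (3) then follows by combining (2) with the definition: $b_n + 1 = \tfrac{F_{n+1} + F_n}{F_n} = \tfrac{F_{n+2}}{F_n}$, and this equals $b_n b_{n+1} = \tfrac{F_{n+1}}{F_n}\tfrac{F_{n+2}}{F_{n+1}}$ after cancellation. Identity (4) is an elegant consequence of (2): both sides equal $a_m + a_n + 1$.

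For identities (5) and (6), the plan is to clear denominators and use the two classical Fibonacci convolution identities
\begin{equation*}
F_{m+n} = F_{m+1} F_n + F_m F_{n-1}, \qquad F_{m+n-1} = F_m F_n + F_{m-1} F_{n-1},
\end{equation*}
both of which follow by a straightforward induction on $n$ from the Fibonacci recurrence (with the base case $n=1$ being trivial). For (5), writing $a_m a_n + 1 = \tfrac{F_{m-1}F_{n-1} + F_m F_n}{F_m F_n} = \tfrac{F_{m+n-1}}{F_m F_n}$ by the second convolution identity, and $a_n + b_m = \tfrac{F_{n-1} F_m + F_{m+1} F_n}{F_m F_n} = \tfrac{F_{m+n}}{F_m F_n}$ by the first, the claim follows after multiplying by $a_{m+n} = \tfrac{F_{m+n-1}}{F_{m+n}}$. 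Identity (6) works analogously: $b_m b_n + 1 = \tfrac{F_{m+1}F_{n+1} + F_m F_n}{F_m F_n} = \tfrac{F_{m+n+1}}{F_m F_n}$, where the numerator is rewritten using the first convolution identity shifted by one (equivalently, $F_{(m+1)+n} = F_{m+2} F_n + F_{m+1} F_{n-1}$, rearranged), and then multiplying by $b_{m+n} = \tfrac{F_{m+n+1}}{F_{m+n}}$ yields the right-hand side.

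The only real obstacle here is bookkeeping: there are no conceptual hurdles, just the need to invoke the correct convolution identity (and the right index shift) in (5) and (6). I would state the two convolution identities as an auxiliary fact at the start of the proof (with a one-line inductive proof) so that identities (5) and (6) become pure arithmetic simplifications, keeping the overall exposition short.
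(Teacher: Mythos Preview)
Your proposal is correct and follows essentially the same approach as the paper: both reduce (1)--(3) to the Fibonacci recurrence and handle (5)--(6) by clearing denominators and invoking the standard convolution identities $F_{m+n} = F_{m+1}F_n + F_m F_{n-1}$ and $F_{m+n-1} = F_m F_n + F_{m-1}F_{n-1}$. The one small difference is in (4): the paper derives it from the closed form $a_n + b_m = \tfrac{F_{m+n}}{F_m F_n}$ (which it needs anyway for (5) and (6)), whereas your observation that both sides equal $a_m + a_n + 1$ via (2) is a slightly cleaner shortcut.
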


\begin{proof}\mbox{}

\begin{enumerate}
    \item Follows immediately from the definition.
    \item $a_n + 1 = \frac{F_{n-1} + F_n}{F_n} = \frac{F_{n+1}}{F_n} = b_n$.
    \item Follows from Properties~\eqref{item:fib-1} and~\eqref{item:fib-2}.   
    \item Follows from $a_n + b_m = \frac{F_{m+n}}{F_m F_n}$, which, in turn, follows from the well-known identity $F_{n-1} F_m + F_{m+1} F_n = F_{m+n}$.
    \item Using $a_n + b_m = \frac{F_{m+n}}{F_m F_n}$, we compute
        \begin{align*}
        a_m a_n + 1 &= \frac{F_{m-1}F_{n-1} + F_m F_n}{F_m F_n} = \frac{F_{m+n-1}}{F_m F_n} \\
         &= \frac{F_{m+n}}{F_m F_n} \frac{F_{m+n-1}}{F_{m+n}} = (a_n + b_m) a_{m+n}.
    \end{align*}
    \item Analogously to~\eqref{item:fib-5}, we compute
    \begin{align*}
        b_m b_n + 1 &= \frac{F_{m+1}F_{n+1} + F_m F_n}{F_m F_n} = \frac{F_{m+n+1}}{F_m F_n} \\
         &= \frac{F_{m+n}}{F_m F_n} \frac{F_{m+n+1}}{F_{m+n}} = (a_n + b_m) b_{m+n}. \qedhere
    \end{align*}
\end{enumerate}
\end{proof}

\begin{example}[continues=ex:infinitely-many-unlucky-primes]
We considered the ideal
$$
    I = \langle xyx - xy - y\rangle \ideal \Q\<x,y>
$$
and claimed that the reduced Gröbner basis of $I$ w.r.t.~any monomial ordering is given by 
    \[
        \mathcal G = \left\{ x y^n x + a_n y^n x - b_n xy^n - y^n \mathrel{\bigg|} n \geq 1\right\},
    \]
with $a_n, b_n$ as defined in Lemma~\ref{lemma:fib-properties}.

There remain two outstanding points to complete the proof of this claim. 
The first is verifying the recursive formula~\eqref{eq:recurion-generator}, given by
\begin{align*}
        g_{n+1} = \frac{-1}{b_n}\bigg( g_n y (x - 1) - (x + a_n) y^n g_1 \bigg),
\end{align*}
where $g_n = xy^nx + a_n y^n x - b_n xy^n - y^n$ for $n \geq 1$.

To verify this formula, we partly factor $g_1$ and $g_n$ as follows:
\begin{align*}
    g_1 &= xy(x-1) - y,\\
    g_n &= (x+a_n)y^nx - (b_nx + 1)y^n.
\end{align*}
Plugging these partial factorizations into the right-hand side of~\eqref{eq:recurion-generator} yields
\begin{align*}
    \frac{-1}{b_n}\bigg(- (b_nx + 1) y^{n+1} (x-1) + (x + a_n) y^{n+1} \bigg),
\end{align*}
where we have already cancelled the common term $(x + a_n)y^nxy(x-1)$.
After expanding this polynomial and collecting common terms, we get
\begin{align*}
    x y^{n+1} x + \frac{1}{b_n} y^{n+1} x - \frac{1}{b_n}(b_n + 1)xy^{n+1} - \frac{1}{b_n}(a_n + 1)y^{n+1}.
\end{align*}
Now, the first three items of Lemma~\ref{lemma:fib-properties} imply that the above polynomial equals $g_{n+1}$.

The second remaining claim to prove is that all S-polynomials of $\mathcal G$ reduce to zero.
We note that for each $m,n \geq 1$, the pair $(g_m,g_n) = ( x y^m x + \ldots,
x y^n x + \ldots) \in \mathcal{G}^2$ gives rise to exactly two overlaps and thus to exactly two S-polynomials:
When leading terms overlap as $xy^n\cdot x \cdot y^m x$, the S\nobreakdash-polynomial is $g_n\cdot y^m x - xy^n \cdot g_m$.
The other overlap $xy^m \cdot x \cdot y^nx$ yields the S-polynomial $g_m\cdot y^n x - xy^m \cdot g_n$.
Since both cases are completely analogous, let us present a complete proof only for the second one:
    \begin{align}
    \begin{aligned}
	   s_{m,n} \;&= &&(a_m y^m x y^n x - b_m xy^{m+n}x - y^{m+n}x) \\
            &&-&(a_n xy^{m+n}x - b_n xy^m xy^n - xy^{m+n}) \\
                        &= &-&(a_n + b_m)xy^{m+n}x + a_m y^m x y^n x \\
                        &&+&\,b_n xy^m xy^n - y^{m+n} x + xy^{m+n}.
    \end{aligned}\label{eq:spoly}
    \end{align}
    The terms $a_m y^m x y^n x$ and $b_n xy^m xy^n$ appearing in $s_{m,n}$ can be reduced 
    by $g_n$ and $g_m$, respectively. 
    This yields
    \begin{align*}
        a_m y^m x y^n x \;&\rightarrow_{g_n}\; a_m \cdot \left( -a_n y^{m+n} x + b_n y^m x y^n + y^{m+n}\right),\\
        b_n xy^m xy^n \;&\rightarrow_{g_m}\; b_n \cdot \left( -a_m y^m x y^n + b_m x y^{m+n} + y^{m+n}\right).
    \end{align*}
    Plugging this into~\eqref{eq:spoly} shows that $s_{m,n}$ can be reduced by $\mathcal G$ to 
    \begin{align*}
        s_{m,n} \;\xrightarrow{*}_G\; &-(a_n + b_m)xy^{m+n}x - (a_m a_n + 1) y^{m+n}x \\
                                            & + (b_m b_n + 1) xy^{m+n} + (a_m + b_n) y^{m+n}.
    \end{align*}
    By the last three items of Lemma~\ref{lemma:fib-properties}, the above polynomial is
    \[
        -(a_n + b_m) \cdot g_{m+n},
    \]
    which implies that it, and thus also $s_{m,n}$, can be reduced to zero by $\mathcal G$.
\end{example}

\end{document}